\documentclass{article}

\PassOptionsToPackage{numbers, compress}{natbib}
\usepackage[preprint]{neurips_2026}

\usepackage[utf8]{inputenc} 
\usepackage[T1]{fontenc}    
\usepackage{hyperref}       
\usepackage{url}            
\usepackage{booktabs}       
\usepackage{amsfonts}       
\usepackage{nicefrac}       
\usepackage{microtype}      
\usepackage{xcolor}         
\usepackage{footnote}
\usepackage{amsmath}
\usepackage{amssymb}
\usepackage{mathtools}
\usepackage{amsthm}
\usepackage[inline]{enumitem}
\usepackage{setspace}
\usepackage{enumitem}
\usepackage{balance}
\usepackage{url}
\usepackage{multirow}
\usepackage{multicol}
\usepackage{array}
\usepackage{bbm}
\usepackage{tikz}
\usepackage[capitalize,noabbrev]{cleveref}
\usepackage{algorithm}
\usepackage{algorithmic}
\usepackage{wrapfig}
\usepackage[most]{tcolorbox}
\theoremstyle{plain}
\newtheorem{theorem}{Theorem}

\newtheorem{lemma}[theorem]{Lemma}

\theoremstyle{definition}
\newtheorem{definition}{Definition}

\theoremstyle{remark}

\newtheorem{innercustomgeneric}{\customgenericname}
\providecommand{\customgenericname}{}

\newcommand{\newcustomtheorem}[2]{%
  \newenvironment{#1}[1]
  {%
   \renewcommand\customgenericname{#2}%
   \renewcommand\theinnercustomgeneric{##1}%
   \innercustomgeneric
  }
  {\endinnercustomgeneric}
}

\newcustomtheorem{customthm}{Theorem}
\newcustomtheorem{customlemma}{Lemma}
\newcommand{\method}{\texttt{CAPER}}

\usepackage[textsize=tiny]{todonotes}

\newcommand{\circledchar}[2][gray!70]{%
    \tikz[baseline=(char.base)]{
        \node[shape=circle,draw=#1,fill=#1,text=white,inner sep=0.75pt,scale=0.85] (char) {#2};
    }
    }
\title{CAPER: Clause-Aligned Process Supervision for Text-to-SQL}

\author{%
\begin{minipage}{0.99\textwidth}
\centering
Lujie Ban$^{1}$,
Jiasheng Shi$^{1}$,
Jinyang Li$^{2}$,
Xiaolin Han$^{3}$,
Tsz Nam Chan$^{4}$,
Chenhao Ma$^{1*}$ \\
\vspace{0.25em}
$^{1}$The Chinese University of Hong Kong, Shenzhen
\quad
$^{2}$The University of Hong Kong
\quad
$^{3}$Northwestern Polytechnical University
\quad
$^{4}$Shenzhen University \\
\vspace{0.25em}
$^{1}$\texttt{lujieban@link.cuhk.edu.cn, \{shijiasheng,machenhao\}@cuhk.edu.cn} \\
$^{2}$\texttt{jl0725@connect.hku.hk}
\quad
$^{3}$\texttt{xiaolinh@nwpu.edu.cn}
\quad
$^{4}$\texttt{edisonchan@szu.edu.cn} \\
\vspace{0.25em}
\end{minipage}
}

\begin{document}

\maketitle

\begin{abstract}
Text-to-SQL systems are typically evaluated by query-level execution correctness, but this terminal signal provides little guidance about which intermediate SQL decision caused success or failure. Token-level dense supervision is also ill-suited: SQL tokens do not align with complete semantic decisions, can penalize execution-equivalent queries, and are difficult to label reliably at scale. We therefore propose {\method}, which automatically derives clause-level supervision via counterfactual intervention on the SQL abstract syntax tree, enabling root-cause error localization for reward modeling; the resulting data is used to train \texttt{\method-9B}, a lightweight Clause-PRM that provides clause-boundary feedback for policy optimization and candidate verification. Experiments on BIRD and Spider show that clause-aligned supervision not only improves execution accuracy, achieving up to a 15.3\% relative EX improvement over GPT-5.4, but also strengthens failure-localization capability, reaching 84.53\% accuracy and 90.60\% MRR on held-out failures. Our project page is at \hyperlink{link}{https://github.com/banrichard/RL-NL2SQL}.
\end{abstract}  

\section{Introduction}
Text-to-SQL aims to translate natural-language questions into executable SQL queries, providing a direct interface between non-expert users and structured databases \cite{liu2025text2sql-llm-era,hong2025next}.
In real-world data workflows, it supports natural-language analytics~\cite{luo2020deepeye,luo2020steerable}, business-intelligence question answering~\cite{maddela2025starqa,sheinin2018quest}, enterprise database operations~\cite{chen2025adept,xu2025abacus,zeng2020photon}, and interactive SQL troubleshooting~\cite{liswe,huobird}.
Large language models (LLMs) have further expanded the design space through schema linking~\cite{wang2025linkalign,gan-etal-2023-appraising}, database retrieval~\cite{kothyari2023crush4sql,wu2025ucs}, and generation correction~\cite{xu2025ts,qu2025share}. Despite this progress, Text-to-SQL remains a high-precision structured generation problem: a generated query must be syntactically valid, schema-grounded, and semantically equivalent to the intended database operation, and a single wrong join key, aggregation, or predicate can still lead to the wrong execution result.

One way to improve Text-to-SQL is to scale supervised fine-tuning (SFT) \cite{li2025omnisql} with larger or richer demonstration corpora. But this shifts the bottleneck to data construction: executable Text-to-SQL annotations are expensive to produce, require SQL expertise, and remain costly to scale~\citep{liswe,huobird,li2023bird,yu-etal-2018-spider,lei2024spider2,wolfson-etal-2022-weakly,zhang-etal-2025-exesql}. In the LLM era, strong SFT pipelines often also depend on richer supervision such as reasoning traces \citep{he-etal-2025-star-sql}. Query-level execution outcomes offer a more scalable source of supervision because they are already used to judge whether a generated SQL query retrieves the correct answer \citep{zhong2017seq2sql,ma2025sql,yao2025arctic_text2sql_r1,weng-etal-2025-graph}. However, an execution outcome is only a terminal label: it says whether the full query succeeds, but not which intermediate SQL decision made it succeed or fail.

This creates a credit assignment problem for process supervision. As illustrated in \Cref{fig:intro_case}, a predicted SQL query can be largely correct yet fail because of one decisive semantic error, such as an incorrect relational condition or join path. Execution-level feedback treats this near-miss the same as a completely wrong query: once the final execution result is incorrect, the signal provides little guidance about which intermediate SQL decision should be reinforced, repaired, or discouraged. In other words, query-level outcomes provide supervision without root-cause localization: they tell us that a query is wrong, but not which SQL decision caused the failure.
\begin{figure}[t]
    \centering
    \includegraphics[width=\linewidth]{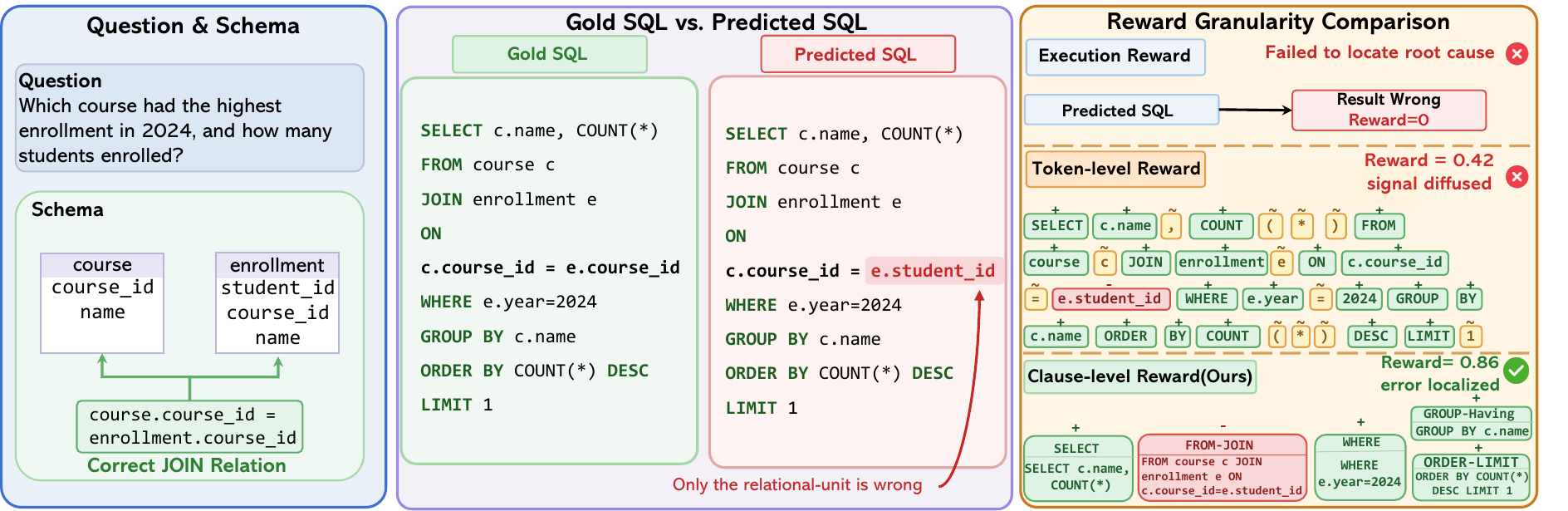}
    \caption{A near-miss Text-to-SQL case illustrating the granularity gap between execution-level outcomes, token-level feedback, and clause-level SQL decisions.}
    \label{fig:intro_case}
    \vspace{-1.6em}
\end{figure}

A na\"ive remedy is to move from execution-level feedback to denser token-level feedback. However, token-level supervision is often too fine-grained for SQL semantics. Individual tokens such as aliases, commas, operators, or column names rarely form complete decisions on their own, while many SQL decisions are expressed by multi-token structures such as selected column sets, predicates, aggregations, or join paths. As a result, token-level feedback can diffuse credit across low-level symbols and require unreliable fine-grained labels. 

These limitations suggest that the right unit of process supervision should lie between the two extremes: denser than execution-level feedback but more semantically grounded than token-level feedback. We therefore focus on clause-level SQL units, which naturally group tokens into meaningful intermediate decisions such as \texttt{SELECT}, \texttt{FROM}/\texttt{JOIN}, \texttt{WHERE}, and \texttt{GROUP BY}/\texttt{HAVING}. Clause-level supervision preserves the scalability of outcome-derived feedback while assigning credit at a granularity closer to how SQL errors affect execution. The key missing step is to localize the root cause of failure at this semantic granularity before converting it into process supervision. This raises question~\circledchar{Q}: \emph{how can we derive clause-aligned supervision for Text-to-SQL from query-level execution outcomes, without requiring costly manual annotations while preserving execution correctness as the task objective?}

\noindent{\textbf{Our Solution.}} To solve question \circledchar{Q}, we propose \circledchar[black]{1} {\method}, an automatic framework that uses counterfactual intervention to localize root-cause SQL decisions and construct clause-level preference supervision, and \circledchar[black]{2} {\method}\texttt{-9B}, a lightweight clause-level process reward model trained for process supervision. Methodologically, {\method} leverages \textbf{$\clubsuit$~counterfactual intervention}, systematically perturbing the abstract syntax tree (AST) to identify the decisive error step responsible for failure. To enhance data diversity and ensure annotation precision, it further adopts \textbf{$\spadesuit$ fault injection}, synthetically generating failure instances by perturbing successful ASTs through targeted corruption.

Further, we fine-tune \texttt{Qwen3.5-9B} on the annotated dataset to obtain {\method}\texttt{-9B} and deploy it as a process reward model in reinforcement learning (RL) for Text-to-SQL, providing fine-grained clause-level rewards for intermediate SQL decisions. A \textbf{multi-granular reward} is designed to supervise RL training, emphasizing both \emph{clause-level} and \emph{execution-level} feedback.
During policy optimization, this multi-granular reward design provides the policy model with both clause-level process feedback and execution-level outcome feedback, yielding denser supervision while preserving final execution accuracy as the task objective.

In summary, our contributions are:
\begin{enumerate}[leftmargin=*]
    \renewcommand{\labelenumi}{\circledchar[black]{\arabic{enumi}}}
    \item \textbf{Automated Pipeline.} We propose a counterfactual SQL-AST pipeline for root-cause localization and clause-level credit assignment without manual annotation, which yields over $90{,}000$ clause-annotated preference tuples across three datasets.
    \item \textbf{Clause-level PRM.} We develop {\method}\texttt{-9B}, a lightweight clause-level process reward model for Text-to-SQL RL optimization through clause-boundary dense rewards, combining intermediate process feedback with final execution correctness.
    \item \textbf{Empirical Evaluation.} Experiments show that RL with {\method}\texttt{-9B} achieves up to a 15.3\% relative EX improvement over GPT-5.4, while {\method}\texttt{-9B} also reaches 84.53\% top-1 failure-localization accuracy and 90.60\% MRR on held-out failures.
\end{enumerate}

\section{Preliminary}
In this section, we introduce the definition of Text-to-SQL and the objective of a process reward model.

\noindent{\textbf{Text-to-SQL.}} Text-to-SQL is a task that converts a natural language question $\mathcal{Q}$ into a SQL query $y$ capable of retrieving the correct answer from a database \cite{li2023bird}. Given a database $\mathcal{D} = \langle\mathcal{C},\mathcal{T}\rangle$, where $\mathcal{C}$ and $\mathcal{T}$ denote the sets of columns and tables, respectively, the Text-to-SQL task can be formulated as:
\begin{equation}
    y = f(\mathcal{Q}, \mathcal{D}\mid \theta),
\end{equation}
where $f(\cdot\mid\theta)$ is the text-to-SQL model parameterized by $\theta$. 

\noindent{\textbf{Process Reward Model.}} The process reward model (PRM) evaluates the correctness of each intermediate reasoning step, providing step-level supervision beyond the final answer \cite{lightman2023lets_verify}. Given an input $x$ and a step-by-step solution $z=(z_1,\dots,z_K)$, the reward of the $k$-th step can be formulated as:
\begin{equation}
    \label{eq:reward_func}
    f_{\psi}(x,z_{\le k})\in\mathbb{R},
    \quad
    q_{\psi}(\ell_k=\mathrm{positive}\mid x,z_{\le k})
    =
    \sigma\!\left(f_{\psi}(x,z_{\le k})\right),
\end{equation}
where $f_{\psi}$ is the real-valued PRM score parameterized by $\psi$, $q_{\psi}$ is the probability induced by the sigmoid link function, $z_{\le k}$ is the solution prefix up to step $k$, and $\ell_k$ is the process label of step $z_k$. A conventional PRM can rank a complete solution by multiplying step probabilities,
\begin{equation}
    R_{\mathrm{PRM}}^{\mathrm{rank}}(x,z)=
    \prod_{k=1}^{K} q_{\psi}(\ell_k=\mathrm{positive}\mid x,z_{\le k}),
\end{equation}
which estimates the probability that all intermediate steps are correct. In our RL formulation, we instead use the real-valued step scores as additive reward signals under the MDP return objective.

\section{Methodology}
\label{sec:method}
\begin{figure*}[h]
    \centering
    \includegraphics[width=\textwidth]{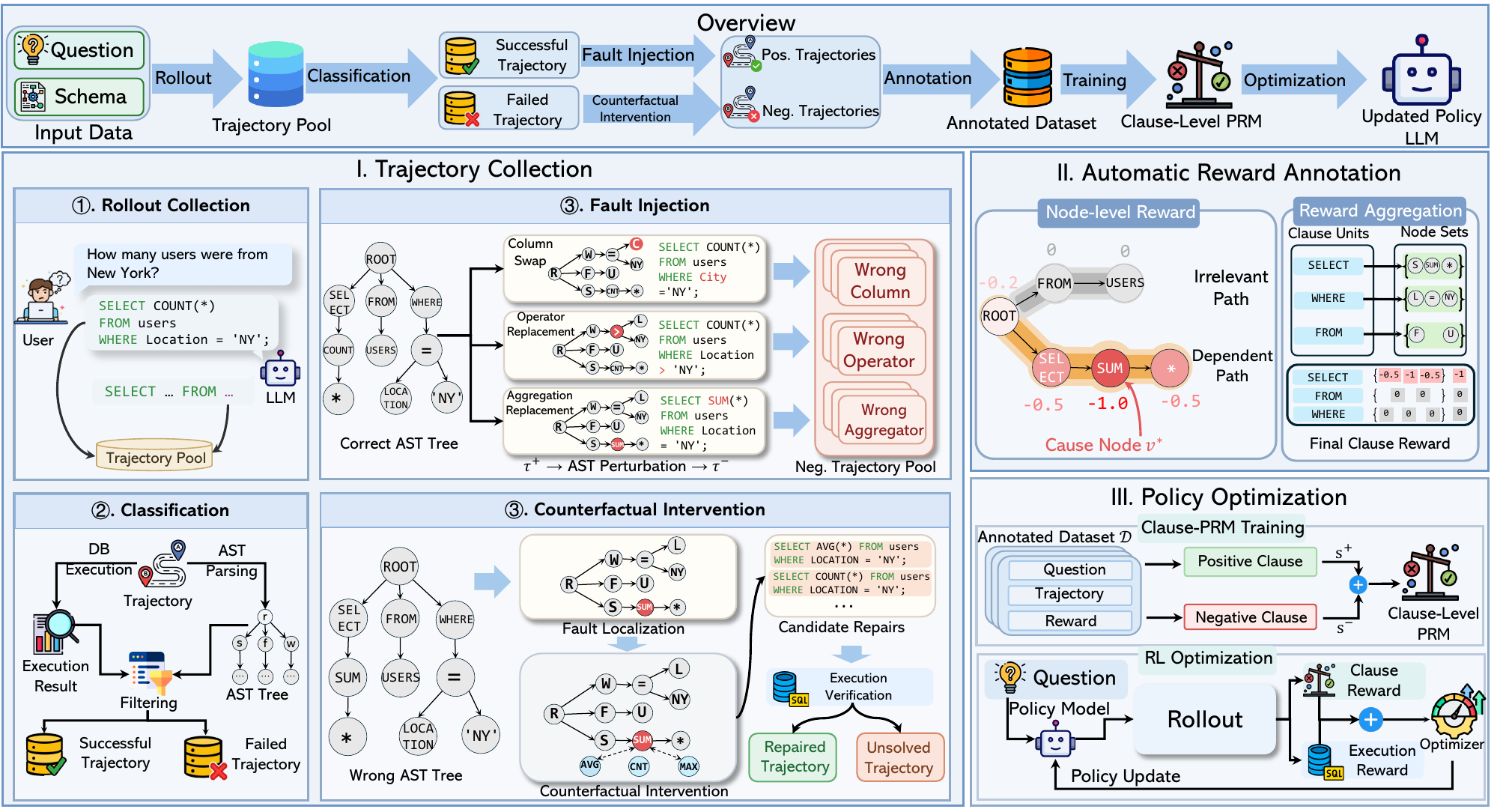}
    \caption{Overview of the {\method} framework.}
    \label{fig:framework}
\end{figure*}
In this section, we propose {\method}, which first constructs clause-level preference supervision from successful and failed SQL trajectories, then trains a Clause-Level Process Reward Model (Clause-PRM) on the resulting annotations, and finally uses it to provide clause-boundary rewards during Text-to-SQL policy optimization, as demonstrated in \Cref{fig:framework}.

\subsection{Clause-Level Text-to-SQL Formulation}
\label{sec:clause_formulation}
For policy optimization, we model SQL generation as an episodic MDP so that clause-aligned process scores can be incorporated into the trajectory return. At token step $t$, the state is the input and current SQL prefix $s_t=(x,a_{<t})$ with $x=(\mathcal{Q},\mathcal{D})$, and the action $a_t$ appends the next SQL token to the prefix. Under a policy $\pi_\theta(a_t\mid s_t)$, a rollout trajectory is
\begin{equation}
    \tau = (s_1,a_1,r_1,s_2,a_2,r_2,\dots,s_T,a_T,r_T,s_{T+1}).
\end{equation}
Let $G(\tau)=\sum_{t=1}^{T}\gamma^{t-1}r_t$ denote its return. The policy objective is
\begin{equation}
    \label{eq:text2sql_mdp}
    J(\theta)=\mathbb{E}_{\tau\sim\pi_\theta}[G(\tau)].
\end{equation}

Although rollouts are token-level, intermediate feedback should operate at the clause level, since SQL errors usually arise from a small number of incorrect top-level clause decisions. We therefore instantiate the generic PRM step in \Cref{eq:reward_func} as a top-level SQL clause unit.
\begin{definition}[Clause Unit]
    \label{def:clause}
A clause unit is a semantically coherent top-level SQL component.
We denote the $k$-th clause unit by $u_k^{\kappa_k}$, where $u_k$
is its content and $\kappa_k \in \mathcal{K}$ is its type:
\begin{equation}
\begin{aligned}
    \mathcal{K} = \{&
\textsc{With},
\textsc{Select},
\textsc{From-Join},
\textsc{Where},\\
&\textsc{Group-Having},
\textsc{Window},
\textsc{Order-Limit},
\textsc{Set-Op}
\}.
\end{aligned}
\end{equation}
Here, \textsc{With} covers CTEs; \textsc{From-Join} groups
\texttt{FROM}/\texttt{JOIN}/\texttt{ON}; \textsc{Group-Having}
groups \texttt{GROUP BY}/\texttt{HAVING}; \textsc{Window} covers
window functions and specifications; \textsc{Order-Limit} covers
\texttt{ORDER BY}/\texttt{LIMIT}/\texttt{OFFSET}; and \textsc{Set-Op} covers
\texttt{UNION}/\texttt{INTERSECT}/\texttt{EXCEPT}. We decompose only the top-level SQL structure; clauses inside nested subqueries are included in their enclosing clause unit.
\end{definition}

At the clause level, a generated SQL query is decomposed as
\begin{equation}
    y=(u_1^{\kappa_1},u_2^{\kappa_2},\dots,u_K^{\kappa_K}).
\end{equation}
Evaluating the $k$-th clause requires the preceding clause context, which we denote as
\begin{equation}
    p_k=(u_1^{\kappa_1},\dots,u_{k-1}^{\kappa_{k-1}}).
\end{equation}
We then instantiate the PRM in \Cref{eq:reward_func} as a real-valued clause preference scorer:
\begin{equation}
    \label{eq:clause_prm_func}
    f_{\psi}(x,p_k,u_k^{\kappa_k})\in\mathbb{R},
\end{equation}
where higher scores indicate clause units that are more consistent with the correct SQL continuation under $(x,p_k)$.

\subsection{Automatic Clause-Level Reward Annotation}
\label{sec:data_construction}
The annotation stage of {\method} converts coarse trajectory-level outcomes into fine-grained clause-level reward annotations.
We begin by collecting a trajectory pool from a supervised fine-tuned policy, denoted by $\pi_{\mathrm{SFT}}$. Let $\mathcal{X}=\{(x_i,y_i)\}_{i=1}^{N}$
be the set of training queries, where $x_i$ is the input question--schema pair and $y_i$ is the corresponding gold SQL query. For each input $x_i$, the SFT model generates one rollout trajectory $\tau_i \sim \pi_{\mathrm{SFT}}(\cdot \mid x_i)$,
which induces a terminal SQL query $\hat{y}(\tau_i)$. This trajectory collection is separate from tuple construction: a single retained rollout can yield multiple clause-level preference tuples because fault injection may perturb several editable nodes and counterfactual repair may proceed across multiple divergent clauses.
The collected trajectories are partitioned by their execution outcome. Let
\begin{equation}
    \Omega(\tau_i)=
\begin{cases}
1, & \text{if } \hat{y}(\tau_i) \text{ is execution-equivalent to } y_i,\\
0, & \text{otherwise}.
\end{cases}
\end{equation}
be the indicator function for trajectory classification. Accordingly, we define
\begin{equation}
    \mathfrak{T}^{+}=\{\tau_i:\Omega(\tau_i)=1\}, \text{ and } \mathfrak{T}^{-}=\{\tau_i:\Omega(\tau_i)=0\}.
\end{equation}
The successful set $\mathfrak{T}^{+}$ provides trusted positive trajectories for fault injection, whereas the failed set $\mathfrak{T}^{-}$ provides naturally occurring error trajectories for counterfactual intervention.
Our goal is to construct clause-level reward annotations by either inducing a root-cause error on $\mathfrak{T}^{+}$ or localizing a root-cause error on $\mathfrak{T}^{-}$.

For each trajectory $\tau$, we parse its terminal SQL $y(\tau)$ into an abstract syntax tree (AST) $G(\tau)=(V,E)$. According to \Cref{def:clause}, each generated unit is associated with a subset of AST nodes through a clause-to-node mapping $\phi(u_k^{\kappa_k}) = V_k \subseteq V$.
This mapping enables us to translate node-level structural errors on the AST into clause-level reward supervision for intermediate decisions along the trajectory.

\noindent{\textbf{Fault Injection for Successful Trajectories.}} For $\tau_i \in \mathfrak{T}^{+}$, the generated SQL is execution-correct and therefore provides a trusted positive reference. Using its AST, we sample one or more editable key nodes $v^\star \in V$ and apply a type-preserving mutation operator $\mu$ to obtain corrupted trees $G'=\mu(G,v^\star)$.
Each corrupted sample is retained only if its execution result differs from that of the original query, ensuring that the mutation induces a genuine semantic error.

\begin{wrapfigure}{r}{0.4\textwidth}
    \centering
    \includegraphics[width=0.4\textwidth]{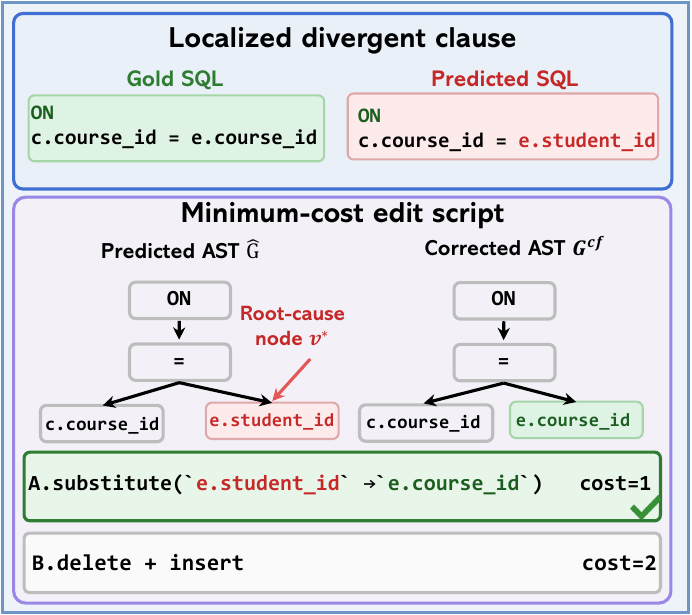}
    \caption{A counterfactual intervention example.}
    \label{fig:anchor_example}
\end{wrapfigure}
\noindent{\textbf{Counterfactual Intervention for Failed Trajectories.}} For $\tau_i \in \mathfrak{T}^{-}$, we recover clause-level supervision by comparing the predicted SQL against its gold counterpart. Let $u_k^{\kappa_k}$ and $\tilde u_k^{\kappa_k}$ denote the predicted and gold clause units. We first identify the earliest divergent clause index
\(
k^\star=\min\{k:\;u_k^{\kappa_k}\neq \tilde u_k^{\kappa_k}\}
\)
and then construct a counterfactual corrected query through an AST-level repair under the same prefix:
\begin{equation}
    y^{\mathrm{cf}}=\mathrm{Intervene}\!\left(\hat{y},\,p_{k^\star},\,u_{k^\star}^{\kappa_{k^\star}}\mapsto \tilde u_{k^\star}^{\kappa_{k^\star}}\right),
\end{equation}
where $\mathrm{Intervene}(\cdot)$ reparses the whole SQL after repair, canonicalizes aliases introduced by the repaired clause, and retains the counterfactual only when the resulting query is syntactically valid and executable. If a single repaired clause does not yield a valid counterfactual, we repeat on the earliest remaining divergence; each retained repair contributes a clause-level tuple. \Cref{fig:anchor_example} illustrates this process.

We then identify the node-level source of this mismatch. Let $\hat{G}=\mathrm{Parse}(\hat y)$ and $G^{\mathrm{cf}}=\mathrm{Parse}(y^{\mathrm{cf}})$ denote the ASTs of the predicted and counterfactually corrected queries, and let $\Xi_{k^\star}(\hat{G},G^{\mathrm{cf}})$ be the set of valid edit scripts that transform the divergent predicted clause into its corrected counterpart. We select the minimum-cost script and define the root-cause node $v^\star$ as the anchor of its first edit:
\begin{equation}
\xi^\star=\arg\min_{\xi \in \Xi_{k^\star}(\hat{G},G^{\mathrm{cf}})} \sum_{m=1}^{|\xi|} c(\xi_m),
\quad
v^\star=\mathrm{Anchor}(\xi_1^\star),
\label{eq:first_divergence}
\end{equation}
where $c(\xi_m)$ is the cost of the $m$-th node or edge edit, and $\mathrm{Anchor}(\xi_1^\star)$ returns the touched node in $\hat{G}$, or its parent attachment node for an insertion. Thus, $\xi^\star$ is the smallest local transformation from the predicted clause to its counterfactual correction.

\noindent{\textbf{Topology-Aware Node-Level Reward.}}
Once the root-cause node $v^\star$ is identified or induced, we assign a continuous penalty to each AST node $\hat v_i \in V$. \Cref{eq:first_divergence} localizes the clause-level mismatch anchor, while the distance below measures shortest-path distance to that anchor within $\hat{G}$:
\begin{equation}
R_{\mathrm{node}}(\hat v_i;v^\star)=
-\alpha \, M(\hat v_i,v^\star)
\exp\!\left(
-\frac{d_{\hat{G}}(\hat v_i,v^\star)^2}{2\sigma_d^2}
\right),
\label{eq:node_reward}
\end{equation}
where $M(\hat v_i,v^\star)\in\{0,1\}$ is a causal mask, and the graph distance $d_{\hat{G}}(\hat v_i,v^\star)=|\eta_{\hat{G}}(\hat v_i,v^\star)|$ is the shortest-path distance on $\hat{G}$, where $\eta_{\hat{G}}(\hat v_i,v^\star)$ is the unique simple AST path connecting $\hat v_i$ and $v^\star$, and $|\eta_{\hat{G}}|$ is the number of edges on that path. We set $M(\hat v_i,v^\star)=1$ only when $\hat v_i$ is $v^\star$ or an ancestor of $v^\star$ within the enclosing clause-level AST region, and $0$ otherwise. Here $\alpha>0$ is the maximum penalty magnitude, and $\sigma_d$ controls the distance-decay bandwidth. The root-cause node itself receives the strongest penalty $-\alpha$, while structurally upstream nodes on the same dependency path receive Gaussian-decayed negative reward. Unrelated branches are masked out and receive zero reward.

\noindent{\textbf{Clause-Level Aggregation and Supervision Construction.}}
To obtain supervision at the clause-unit level, we aggregate node-level penalties within each dispreferred unit:
\begin{equation}
R_{\mathrm{clause}}^{-}(u_k^{\kappa_k};v^\star)
=
\min_{\hat v_i \in \phi(u_k^{\kappa_k})}
R_{\mathrm{node}}(\hat v_i;v^\star).
\label{eq:clause_reward}
\end{equation}
The minimum operator propagates the strongest node-level penalty inside a clause unit to the clause level, ensuring that a clause containing the decisive error inherits a strong non-positive label.

For each supervised clause pair $(u_k^{+},u_k^{-})$, where $u_k^{+}$ is the preferred clause unit and $u_k^{-}$ is the dispreferred clause unit, we use the preferred clause as a zero-penalty reference and define the margin by the dispreferred clause penalty:
\begin{equation}
\delta_k = - R_{\mathrm{clause}}^{-}(u_k^{-};v^\star),
\label{eq:margin}
\end{equation}
Since $R_{\mathrm{clause}}^{-}(u_k^{-};v^\star)\le 0$ by construction, we have $\delta_k\ge 0$, with larger values indicating a more decisive structural error in the dispreferred clause.

Based on these clause-level rewards, we construct local supervision tuples under a shared input--prefix context. For successful trajectories, the original clause unit is preferred over its injected counterpart; for failed trajectories, the counterfactually corrected clause is preferred over the original failed clause. We therefore define
\begin{equation}
\mathcal{D}^{+}=
\left\{
(x,p_k,(u_k^{\kappa_k})^{+},(u_k^{\kappa_k})^{-},\kappa_k,\delta_k)
\;\middle|\;
\tau \in \mathfrak{T}^{+}
\right\},
\end{equation}
and
\begin{equation}
\mathcal{D}^{-}=
\left\{
(x,p_{k^\star},(\tilde u_{k^\star}^{\kappa_{k^\star}})^{+},(u_{k^\star}^{\kappa_{k^\star}})^{-},\kappa_{k^\star},\delta_{k^\star})
\;\middle|\;
\tau \in \mathfrak{T}^{-}
\right\}.
\end{equation}
The final annotated dataset is $\mathcal{D}=\mathcal{D}^{+}\cup\mathcal{D}^{-}$, where each tuple carries both a discrete preference label and a continuous margin $\delta_k$ reflecting the structural severity of the clause-level error.
\subsection{Clause-Level Process Reward Model}
\label{sec:clause_prm}
Using the clause-level PRM in \Cref{eq:clause_prm_func}, we train Clause-PRM on $\mathcal{D}$. For each tuple $(x,p_k,u_k^{+},u_k^{-},\kappa_k,\delta_k)\in\mathcal{D}$, we optimize the margin-weighted preference objective
\begin{equation}
    \mathcal{L}_{\mathrm{PRM}}
    =
    -\frac{1}{|\mathcal{D}|}
    \sum_{(x,p_k,u_k^{+},u_k^{-},\kappa_k,\delta_k)\in\mathcal{D}}
    \left(1+\frac{\delta_k}{\alpha}\right)
    \log \sigma\!\left(
    f_{\psi}(x,p_k,u_k^{+})-f_{\psi}(x,p_k,u_k^{-})
    \right),
    \label{eq:prm_objective}
\end{equation}
where $\sigma(\cdot)$ is the sigmoid function and $\alpha$ is the maximum penalty magnitude in \Cref{eq:node_reward}. Thus, clause pairs with larger structural margins contribute stronger preference supervision.

For policy optimization, the old policy samples a group of structured responses $\{z^{(g)}\}_{g=1}^{B}$ for the same input $x$. Each response yields a SQL query $\hat y^{(g)}=\mathrm{ExtractSQL}(z^{(g)})$ and a clause sequence $(\hat u_{g,1}^{\kappa_{g,1}},\dots,\hat u_{g,K_g}^{\kappa_{g,K_g}})$. At the $k$-th clause boundary, we assign
\begin{equation}
    r_{g,k}
    =
    I_{\mathrm{format}}(z^{(g)})
    \left[
    \lambda f_{\psi}(x,\hat p_{g,k},\hat u_{g,k}^{\kappa_{g,k}})
    +(1-\lambda)\mathbbm{1}[k=K_g]r_{\mathrm{exec}}^{(g)}
    \right],
    \label{eq:prefix_reward}
\end{equation}
where $r_{\mathrm{exec}}^{(g)}=\mathbbm{1}[\mathrm{Exec}(\hat y^{(g)})=\mathrm{Exec}(y)]$. Thus, the execution reward is added only once at the terminal clause boundary, while intermediate boundaries receive learned clause-level process rewards. The components are:
\begin{itemize}[leftmargin=*]
    \item \textbf{Format reward~\cite{li2025omnisql}:} $I_{\mathrm{format}}(z^{(g)})=\mathbbm{1}[\mathrm{ValidFormat}(z^{(g)})]$ checks the response structure rather than SQL syntax; it equals $1$ only when $z^{(g)}$ follows the required \texttt{<think>} $\cdots$ \texttt{</think>} and \texttt{<answer>} $\cdots$ \texttt{</answer>} format.
    \item \textbf{Clause reward:} $f_{\psi}(x,\hat p_{g,k},\hat u_{g,k}^{\kappa_{g,k}})$ scores the current clause under the predicted prefix.
    \item \textbf{Execution reward:} $r_{\mathrm{exec}}^{(g)}$ checks whether the predicted and gold SQL produce the same execution result.
\end{itemize}

Instead of collapsing all clause rewards into a single trajectory-level advantage, we compute a clause-level return-to-go $G_{g,k}= \sum_{j=k}^{K_g}\gamma_c^{j-k}r_{g,j}$ where $\gamma_c$ is a clause-level discount factor. We use GRPO-style normalization at each clause position. Let $\mathcal{B}_k=\{g:K_g\ge k\}$ be the valid responses that contain a $k$-th clause. The clause-level group advantage is
\begin{equation}
    A_{g,k}
    =
    \frac{
    G_{g,k}-\mathrm{mean}(\{G_{h,k}:h\in\mathcal{B}_k\})
    }{
    \mathrm{std}(\{G_{h,k}:h\in\mathcal{B}_k\})+\epsilon_{\mathrm{std}}
    },
    \label{eq:grpo_adv}
\end{equation}
for $g\in\mathcal{B}_k$. Let $c_g(t)\in\{1,\dots,K_g\}$ map token position $t$ to the clause span whose boundary reward supervises that token. The clipped GRPO objective becomes
\begin{equation}
    \mathcal{L}_{\mathrm{GRPO}}(\theta)
    =
    -\mathbb{E}_{x,\{z^{(g)}\}}
    \left[
    \frac{1}{B}
    \sum_{g=1}^{B}
    \frac{1}{|z^{(g)}|}
    \sum_{t=1}^{|z^{(g)}|}
    \min\!\left(
    \rho_{g,t}A_{g,c_g(t)},\,
    \bar\rho_{g,t}A_{g,c_g(t)}
    \right)
    \right],
    \label{eq:grpo_loss}
\end{equation}
where $\rho_{g,t}=\pi_{\theta}(z_t^{(g)}\mid x,z_{<t}^{(g)})/\pi_{\theta_{\mathrm{old}}}(z_t^{(g)}\mid x,z_{<t}^{(g)})$ and $\bar\rho_{g,t}=\mathrm{clip}(\rho_{g,t},1-\epsilon_{\mathrm{clip}},1+\epsilon_{\mathrm{clip}})$. This clause-level return-to-go preserves temporal credit assignment: tokens in a clause are optimized with the advantage of the remaining trajectory after that clause boundary, rather than a single trajectory-level advantage shared by all tokens.

\section{Experiments}
\subsection{Experiment Setup}
\noindent{\textbf{Data and Models.}} We evaluate {\method} on BIRD~\cite{li2023bird} and Spider~\cite{yu-etal-2018-spider}. We first run \texttt{Qwen3.5-9B} on the training splits and then apply counterfactual intervention and fault injection to build over $90{,}000$ clause-annotated preference tuples from BIRD-train, Spider-dev and SYNSQL-5k-train. We fine-tune \texttt{Qwen3.5-9B} on this data to obtain {\method}\texttt{-9B} as our clause-level PRM. For policy optimization, we start from a \texttt{Qwen3.5-9B-SFT} policy model (fine-tuned on BIRD-train and SYNSQL-5k \cite{ma2025sql}) and compare GRPO under sparse execution reward, token-level reward, and the clause-level reward from {\method}\texttt{-9B}.

\noindent{\textbf{Evaluation Protocol.}} For end-to-end Text-to-SQL, we report execution accuracy (EX) \cite{li2023bird} on the development set of BIRD, and both development and test sets of Spider under greedy decoding and Majority Vote@8 \cite{li2025omnisql}.
For failure localization, we hold out 10\% of the annotated data and evaluate only failed trajectories. Each method ranks the clause units in a trajectory, and we report top-1 localization accuracy ($\mathrm{Acc}_{\mathrm{loc}}$), Hit@3, and mean reciprocal rank (MRR). 

\noindent{\textbf{Baselines.}} For end-to-end evaluation, we compare against representative open- and closed-source Text-to-SQL LLMs, together with backbone-matched \texttt{Qwen3.5-9B} policies, including the supervised policy and GRPO variants under sparse, token, or clause-level rewards. For failure localization, we compare against heuristic selectors, prompted self-debugging, an execution-only reward model, and ablations of our annotation pipeline. For DeepEye-SQL, we use GPT-5.4 as the base model due to limited computational resources. Full baseline lists, hyperparameters, inference details, and hardware setup are provided in \Cref{app:exp_details}.

\subsection{Main Results}
Our main question is the impact of clause-level supervision on RL itself: \emph{does a more semantically aligned reward improve policy optimization over sparse execution feedback or token-level shaping?} Since Text-to-SQL is ultimately judged by end-to-end execution accuracy, we first evaluate Clause-PRM by its effect on the final GRPO-trained policy on BIRD and Spider.
As shown in \Cref{tab:ex_benchmarks}, three trends are clear. First, within the backbone-matched setting, \texttt{GRPO-Clause} is the strongest policy under both greedy decoding and Majority Vote@8, showing that clause-level feedback improves both single-sample quality and the quality of sampled candidate sets. Second, the gains are more pronounced under Majority Vote@8, suggesting that clause-level supervision helps the policy produce candidates with more reliable overall semantics, not just locally better next-token decisions. Third, simply making the reward denser is not enough: token-level reward does not match the clause-level variant, and in several cases it also fails to improve over sparse-reward GRPO. Overall, the trend across BIRD and Spider supports our central claim that semantically aligned clause-level credit assignment transfers to better end-to-end Text-to-SQL behavior.

\begin{table}[t]
\centering
\caption{Main execution accuracy (EX, \%) on BIRD and Spider under greedy decoding (P@1) and Majority Vote@8 (MV@8), where the backbone-matched block compares \texttt{Qwen3.5-9B-SFT} with GRPO variants trained under sparse, token-level, and clause-level rewards; the best and second-best results are highlighted in bold and underlined, respectively.}
\scriptsize
\begin{tabular}{lcccccccc}
\toprule
\multirow{2}{*}{Model} & \multicolumn{2}{c}{BIRD} & \multicolumn{2}{c}{Spider Dev} & \multicolumn{2}{c}{Spider Test} & \multicolumn{2}{c}{Avg.} \\
\cmidrule(lr){2-3}\cmidrule(lr){4-5}\cmidrule(lr){6-7}\cmidrule(lr){8-9}
& P@1 & MV@8 & P@1 & MV@8 & P@1 & MV@8 & P@1 & MV@8 \\
\midrule
\multicolumn{9}{c}{\textbf{Closed-source LLMs}} \\
\midrule
\texttt{Gemini-2.5-Pro}~\cite{comanici2025gemini} & \underline{62.32} & 63.77 & \textbf{82.78} & \textbf{84.13} & 78.04 & \underline{83.65} & \underline{74.38} & \underline{77.18} \\
\texttt{Claude Sonnet 4}~\cite{claude} & 55.34 & 66.10 & 76.40 & 78.52 & 75.04 & 79.97 & 68.93 & 74.86 \\
\texttt{GPT-5.4}~\cite{openai2026gpt54} & 59.58 & 60.36 & 77.85 & 83.26 & 78.24 & 81.73 & 71.89 & 75.12 \\
\texttt{Claude Sonnet 4.6}~\cite{anthropic2026sonnet46} & 54.62 & 67.13 & 76.59 & 79.01 & 74.84 & 78.59 & 68.68 & 74.91 \\
\texttt{DeepEye-SQL}~\cite{li2025deepeye_sql} & / & \underline{68.56} & / & 73.89 & / & 80.44 & / & 74.30 \\
\midrule
\multicolumn{9}{c}{\textbf{Open-source LLMs (7B--9B)}} \\
\midrule
\texttt{OmniSQL-7B}~\cite{li2025omnisql} & 36.83 & 56.38 & 76.78 & 79.11 & 77.55 & 79.69 & 63.72 & 71.73 \\
\texttt{SQL-R1-7B}~\cite{ma2025sql} & 36.57 & 56.58 & 78.52 & 79.21 & 80.34 & 80.48 & 65.14 & 72.09 \\
\texttt{XiyanSQL-7B}~\cite{gao2024xiyan_sql} & 30.64 & 35.00 & 55.03 & 79.49 & 53.88 & 76.99 & 46.52 & 63.83 \\
\texttt{AlphaSQL}~\cite{li2025alpha_sql} & / & 58.51 & / & 73.59  & / & 77.96& / & 70.02 \\
\texttt{Qwen3.5-9B}~\cite{qwen35} & 51.90 & 58.91 & 75.44 & 76.59 & 78.43 & 79.92 & 68.59 & 71.81 \\
\midrule
\multicolumn{9}{c}{\textbf{Open-source LLMs ($\geq$14B)}} \\
\midrule
\texttt{OmniSQL-14B}~\cite{li2025omnisql} & 45.96 & 65.71 & 76.98 & 79.20 & 78.34 & 79.83 & 67.09 & 74.91 \\
\texttt{XiyanSQL-14B}~\cite{gao2024xiyan_sql} & 37.15 & 42.11 & 58.46 & 80.27 & 73.87 & 78.44 & 56.49 & 66.94 \\
\midrule
\multicolumn{9}{c}{\textbf{Backbone-Matched Policies}} \\
\midrule
\texttt{Qwen3.5-9B-SFT}~\cite{qwen35} & 57.56 & 61.34 & 78.23 & 79.69 & 78.71 & 80.48 & 71.50 & 73.84 \\
\texttt{GRPO-Sparse} & 60.21 & 62.19 & 80.15 & 81.22 & \underline{81.56} & 82.09 & 73.97 & 75.17 \\
\texttt{GRPO-Clause} & \textbf{63.56} & \textbf{69.61} & \underline{81.63} & \underline{83.55} & \textbf{82.63} & \textbf{83.87} & \textbf{75.94} & \textbf{79.01} \\
\texttt{GRPO-Token} & 57.36 & 60.49 & 78.59 & 79.11 & 79.17 & 79.62 & 71.71 & 73.07 \\
\bottomrule
\end{tabular}
\label{tab:ex_benchmarks}
\end{table}

\subsection{Failure Localization}
We next evaluate whether Clause-PRM learns error attribution rather than only trajectory-level preference by asking each method to rank faulty clauses in held-out failed trajectories. As shown in \Cref{tab:failure_localization}, \texttt{{\method}-9B} achieves the best overall results across all three metrics. The strong \textsc{Self-Debug} baseline suggests that prompted LLMs can often identify suspicious regions, but Clause-PRM provides more reliable clause rankings through explicit clause-level supervision.
The ablations further clarify where this gain comes from. Removing topology causes a large drop in top-1 accuracy while keeping Hit@3 relatively high, which suggests that the model can still place the faulty clause near the top of the list but struggles to rank it precisely without structural propagation. Removing fault injection or counterfactual supervision is even more damaging, pushing performance much closer to heuristic and execution-only baselines. Together, these trends indicate that strong failure localization does not come from generic reward modeling alone; it depends on combining clause-level supervision with topology-aware propagation and counterfactual error construction.


\begin{table}[!ht]
\centering
\caption{Failure localization on held-out failures. All metrics are reported in percentages.}
\begin{tabular}{lccc}
\toprule
Method & $\mathrm{Acc}_{\mathrm{loc}}(\%) \uparrow$ & Hit@3 (\%) $\uparrow$ & MRR (\%) $\uparrow$ \\
\midrule
\textsc{Random-Clause} & 25.90 & 76.34 & 52.51 \\
\textsc{Last-Clause} & 9.82 & 53.49 & 38.60 \\
\texttt{Qwen3.5-9B} (\textsc{Self-Debug}) & \underline{82.78} & \underline{93.56} & \underline{86.35} \\
\texttt{Exec-RM-9B} & 25.93 & 71.50 & 50.82 \\
\texttt{w/o Topology} & 45.50 & 89.96 &67.34 \\
\texttt{w/o Fault Injection} & 22.20 & 80.17 & 51.24 \\
\texttt{w/o Counterfactual} & 26.02 & 83.06 & 54.31 \\
\midrule
\texttt{{\method}-9B} & \textbf{84.53} & \textbf{96.58} & \textbf{90.60} \\
\bottomrule
\end{tabular}
\label{tab:failure_localization}
\end{table}

\subsection{Clause-PRM as a Candidate Verifier}
\begin{figure}[h]
    \centering
    \includegraphics[width=0.85\textwidth]{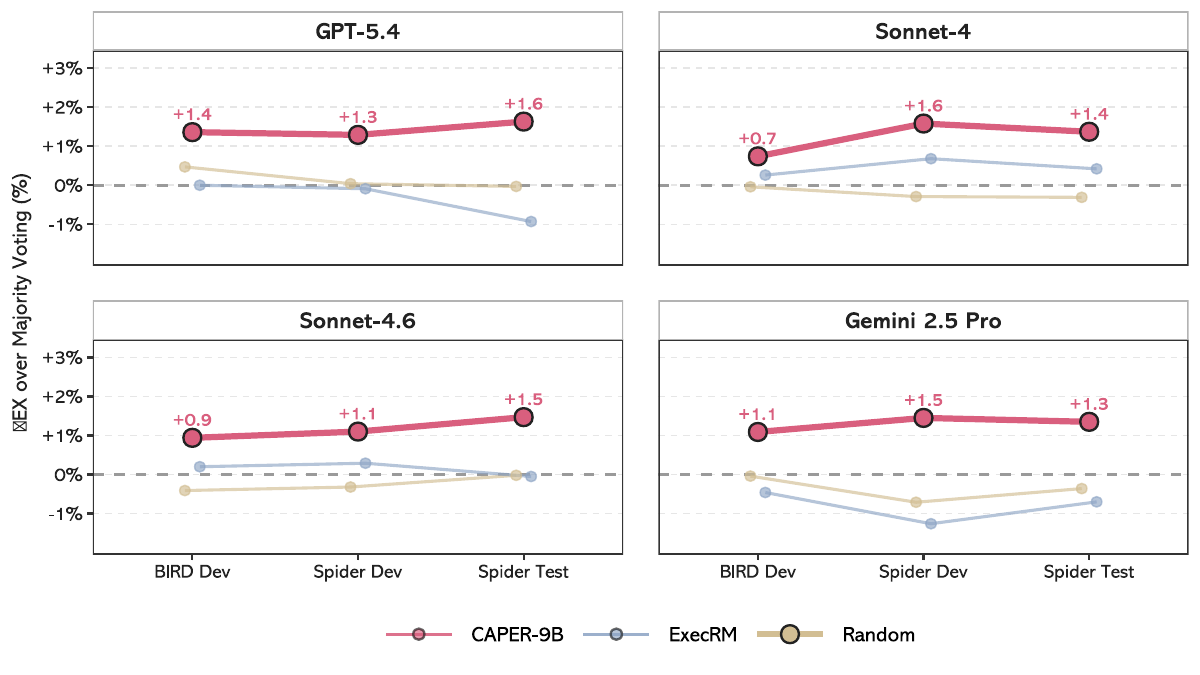}
    \caption{Candidate verification gains over Majority Vote@8, where each selector ranks the same eight sampled SQL candidates and $\Delta$EX measures the absolute execution-accuracy change over majority voting.}
    \label{fig:candidate_verifier_delta}
\end{figure}

 Beyond using Clause-PRM as a reward model for policy optimization, we further evaluate whether it can serve as a plug-and-play verifier for candidates generated by closed-source LLMs. This setting is motivated by black-box Text-to-SQL deployment, where practitioners often cannot update the generator but can sample multiple candidates and rerank them with an external verifier. For each question, all non-greedy selectors operate on the same eight sampled SQL candidates, so the comparison isolates the candidate selection rule rather than generation quality. As shown in \Cref{fig:candidate_verifier_delta}, \texttt{{\method}-9B} consistently improves over Majority Vote@8 across four closed-source generators and three evaluation splits, with gains ranging from $+0.7$ to $+1.6$ EX points. In contrast, \texttt{Exec-RM} and random selection are less stable and often stay near or below majority voting. This suggests that majority voting can be limited by execution-result frequency, while terminal-label reward models provide only coarse candidate-level signals; clause-level process supervision offers a more transferable semantic quality signal for identifying candidates with reliable intermediate SQL decisions.

\section{Related Work}

\noindent{\textbf{Text-to-SQL.}} Text-to-SQL has progressed from early semantic parsing systems such as Seq2SQL~\citep{zhong2017seq2sql} to cross-domain benchmarks like Spider~\citep{yu-etal-2018-spider}, with later work improving structural modeling through intermediate representations, relation-aware schema encoding, and constrained decoding~\citep{guo-etal-2019-irnet,wang-etal-2020-rat-sql,scholak-etal-2021-picard}. Recent studies further extend evaluation to more realistic settings, including BIRD and Spider~2.0, and increasingly rely on LLM-based decomposition and self-correction \citep{li2023bird,lei2024spider2,pourreza2023din-sql,gao2024dail-sql,he-etal-2025-star-sql}. Despite this progress, surveys consistently highlight persistent brittleness under schema shift and spurious language-SQL correlations \citep{deng-etal-2022-recent,liu2025text2sql-llm-era}. Our work is most related to this robustness direction, focusing on counterfactual supervision rather than new parser architectures or prompting pipelines.

\noindent{\textbf{Credit Assignment.}} Credit assignment remains difficult when rewards are sparse or delayed. Prior work addresses this problem through counterfactual attribution in logged feedback and RL, including Counterfactual Risk Minimization and its variants, COMA, and counterfactual off-policy evaluation~\citep{swaminathan2015crm,london2019bayesian-crm,zenati2023scrm,foerster2018coma,oberst2019counterfactual-ope}. Process supervision similarly shows that intermediate feedback can be more informative than pure outcome signals for long-horizon reasoning \citep{uesato2022process_feedback,lightman2023lets_verify}. In Text-to-SQL, recent work explores execution rewards, graph-based shaping, stepwise PRM guidance, and clause-wise critics~\citep{zhong2017seq2sql,ma2025sql,yao2025arctic_text2sql_r1,weng-etal-2025-graph,zhang2025reward_sql,chen2025sqlcritic}. Our method follows this credit-assignment perspective, but derives semantically grounded clause-level supervision from counterfactual AST perturbations for reward modeling and policy optimization.

\section{Conclusion}

This paper studies how to construct semantically meaningful process supervision for Text-to-SQL from coarse query-level outcomes. By attributing execution success and failure to clause-level SQL decisions, {\method} provides a practical credit assignment mechanism that avoids both sparse terminal labels and brittle token-level supervision. The resulting Clause-PRM improves failure localization, candidate verification, and GRPO-based policy optimization on BIRD and Spider, suggesting that clause-aligned process supervision is a scalable direction for structured generation.

\newpage
\bibliographystyle{unsrtnat}
\bibliography{reference}

@inproceedings{deng-etal-2022-recent,
  title = {Recent Advances in Text-to-SQL: A Survey of What We Have and What We Expect},
  author = {Deng, Naihao and Chen, Yulong and Zhang, Yue},
  booktitle = {Proceedings of the 29th International Conference on Computational Linguistics},
  pages = {2166--2187},
  address = {Gyeongju, Republic of Korea},
  publisher = {International Committee on Computational Linguistics},
  year = {2022},
  url = {https://aclanthology.org/2022.coling-1.190/}
}

@article{liu2025text2sql-llm-era,
  title = {A Survey of Text-to-SQL in the Era of LLMs: Where Are We, and Where Are We Going?},
  author = {Liu, Xinyu and Shen, Shuyu and Li, Boyan and Ma, Peixian and Jiang, Runzhi and Zhang, Yuxin and Fan, Ju and Li, Guoliang and Tang, Nan and Luo, Yuyu},
  journal = {IEEE Transactions on Knowledge and Data Engineering},
  volume = {37},
  number = {10},
  pages = {5735--5754},
  year = {2025},
  doi = {10.1109/TKDE.2025.3592032},
  url = {https://dblp.org/rec/journals/tkde/LiuSLMJZFLTL25}
}

@article{zhong2017seq2sql,
  title = {Seq2SQL: Generating Structured Queries from Natural Language Using Reinforcement Learning},
  author = {Zhong, Victor and Xiong, Caiming and Socher, Richard},
  journal = {CoRR},
  volume = {abs/1709.00103},
  year = {2017},
  doi = {10.48550/arXiv.1709.00103},
  url = {https://arxiv.org/abs/1709.00103}
}

@inproceedings{yu-etal-2018-spider,
  title = {Spider: A Large-Scale Human-Labeled Dataset for Complex and Cross-Domain Semantic Parsing and Text-to-SQL Task},
  author = {Yu, Tao and Zhang, Rui and Yang, Kai and Yasunaga, Michihiro and Wang, Dongxu and Li, Zifan and Ma, James and Li, Irene and Yao, Qingning and Roman, Shanelle and Zhang, Zilin and Radev, Dragomir},
  booktitle = {Proceedings of the 2018 Conference on Empirical Methods in Natural Language Processing},
  pages = {3911--3921},
  address = {Brussels, Belgium},
  publisher = {Association for Computational Linguistics},
  year = {2018},
  doi = {10.18653/v1/D18-1425},
  url = {https://aclanthology.org/D18-1425/}
}

@inproceedings{guo-etal-2019-irnet,
  title = {Towards Complex Text-to-SQL in Cross-Domain Database with Intermediate Representation},
  author = {Guo, Jiaqi and Zhan, Zecheng and Gao, Yan and Xiao, Yan and Lou, Jian-Guang and Liu, Ting and Zhang, Dongmei},
  booktitle = {Proceedings of the 57th Annual Meeting of the Association for Computational Linguistics},
  pages = {4524--4535},
  address = {Florence, Italy},
  publisher = {Association for Computational Linguistics},
  year = {2019},
  doi = {10.18653/v1/P19-1444},
  url = {https://aclanthology.org/P19-1444/}
}

@inproceedings{wang-etal-2020-rat-sql,
  title = {RAT-SQL: Relation-Aware Schema Encoding and Linking for Text-to-SQL Parsers},
  author = {Wang, Bailin and Shin, Richard and Liu, Xiaodong and Polozov, Oleksandr and Richardson, Matthew},
  booktitle = {Proceedings of the 58th Annual Meeting of the Association for Computational Linguistics},
  pages = {7567--7578},
  address = {Online},
  publisher = {Association for Computational Linguistics},
  year = {2020},
  doi = {10.18653/v1/2020.acl-main.677},
  url = {https://aclanthology.org/2020.acl-main.677/}
}

@inproceedings{scholak-etal-2021-picard,
  title = {PICARD: Parsing Incrementally for Constrained Auto-Regressive Decoding from Language Models},
  author = {Scholak, Torsten and Schucher, Nathan and Bahdanau, Dzmitry},
  booktitle = {Proceedings of the 2021 Conference on Empirical Methods in Natural Language Processing},
  pages = {9895--9901},
  address = {Online and Punta Cana, Dominican Republic},
  publisher = {Association for Computational Linguistics},
  year = {2021},
  doi = {10.18653/v1/2021.emnlp-main.779},
  url = {https://aclanthology.org/2021.emnlp-main.779/}
}

@article{gao2024dail-sql,
  title = {Text-to-SQL Empowered by Large Language Models: A Benchmark Evaluation},
  author = {Gao, Dawei and Wang, Haibin and Li, Yaliang and Sun, Xiuyu and Qian, Yichen and Ding, Bolin and Zhou, Jingren},
  journal = {Proceedings of the VLDB Endowment},
  volume = {17},
  number = {5},
  pages = {1132--1145},
  year = {2024},
  doi = {10.14778/3641204.3641221},
  url = {https://dblp.org/rec/journals/pvldb/GaoWLSQDZ24}
}

@inproceedings{pourreza2023din-sql,
	author = {Pourreza, Mohammadreza and Rafiei, Davood},
	booktitle = {Advances in Neural Information Processing Systems},
	pages = {36339--36348},
	title = {DIN-SQL: Decomposed In-Context Learning of Text-to-SQL with Self-Correction},
	url = {https://proceedings.neurips.cc/paper_files/paper/2023/file/72223cc66f63ca1aa59edaec1b3670e6-Paper-Conference.pdf},
	volume = {36},
	year = {2023},
}

@inproceedings{li2023bird,
  title = {Can LLM Already Serve as a Database Interface? A BIg Bench for Large-Scale Database Grounded Text-to-SQLs},
  author = {Li, Jinyang and Hui, Binyuan and Qu, Ge and Yang, Jiaxi and Li, Binhua and Li, Bowen and Wang, Bailin and Qin, Bowen and Geng, Ruiying and Huo, Nan and Zhou, Xuanhe and Ma, Chenhao and Li, Guoliang and Chang, Kevin Chen-Chuan and Huang, Fei and Cheng, Reynold and Li, Yongbin},
  booktitle = {Advances in Neural Information Processing Systems 36},
  year = {2023},
  url = {https://proceedings.neurips.cc/paper_files/paper/2023/hash/83fc8fab1710363050bbd1d4b8cc0021-Abstract-Datasets_and_Benchmarks.html}
}

@inproceedings{he-etal-2025-star-sql,
  title = {STaR-SQL: Self-Taught Reasoner for Text-to-SQL},
  author = {He, Mingqian and Shen, Yongliang and Zhang, Wenqi and Peng, Qiuying and Wang, Jun and Lu, Weiming},
  booktitle = {Proceedings of the 63rd Annual Meeting of the Association for Computational Linguistics (Volume 1: Long Papers)},
  pages = {24365--24375},
  address = {Vienna, Austria},
  publisher = {Association for Computational Linguistics},
  year = {2025},
  doi = {10.18653/v1/2025.acl-long.1187},
  url = {https://aclanthology.org/2025.acl-long.1187/}
}

@inproceedings{swaminathan2015crm,
  title = {Counterfactual Risk Minimization: Learning from Logged Bandit Feedback},
  author = {Swaminathan, Adith and Joachims, Thorsten},
  booktitle = {Proceedings of the 32nd International Conference on Machine Learning},
  series = {Proceedings of Machine Learning Research},
  volume = {37},
  pages = {814--823},
  address = {Lille, France},
  publisher = {PMLR},
  year = {2015},
  url = {https://proceedings.mlr.press/v37/swaminathan15.html}
}

@inproceedings{foerster2018coma,
  title = {Counterfactual Multi-Agent Policy Gradients},
  author = {Foerster, Jakob N. and Farquhar, Gregory and Afouras, Triantafyllos and Nardelli, Nantas and Whiteson, Shimon},
  booktitle = {Proceedings of the Thirty-Second AAAI Conference on Artificial Intelligence},
  pages = {2974--2982},
  publisher = {AAAI Press},
  year = {2018},
  doi = {10.1609/AAAI.V32I1.11794},
  url = {https://dblp.org/rec/conf/aaai/FoersterFANW18.html}
}

@inproceedings{london2019bayesian-crm,
  title = {Bayesian Counterfactual Risk Minimization},
  author = {London, Ben and Sandler, Ted},
  booktitle = {Proceedings of the 36th International Conference on Machine Learning},
  series = {Proceedings of Machine Learning Research},
  volume = {97},
  pages = {4125--4133},
  publisher = {PMLR},
  year = {2019},
  url = {https://proceedings.mlr.press/v97/london19a.html}
}

@inproceedings{oberst2019counterfactual-ope,
  title = {Counterfactual Off-Policy Evaluation with Gumbel-Max Structural Causal Models},
  author = {Oberst, Michael and Sontag, David},
  booktitle = {Proceedings of the 36th International Conference on Machine Learning},
  series = {Proceedings of Machine Learning Research},
  volume = {97},
  pages = {4881--4890},
  publisher = {PMLR},
  year = {2019},
  url = {https://proceedings.mlr.press/v97/oberst19a.html}
}

@inproceedings{zenati2023scrm,
  title = {Sequential Counterfactual Risk Minimization},
  author = {Zenati, Houssam and Diemert, Eustache and Martin, Matthieu and Mairal, Julien and Gaillard, Pierre},
  booktitle = {Proceedings of the 40th International Conference on Machine Learning},
  series = {Proceedings of Machine Learning Research},
  volume = {202},
  pages = {40681--40706},
  publisher = {PMLR},
  year = {2023},
  url = {https://proceedings.mlr.press/v202/zenati23a.html}
}

@article{uesato2022process_feedback,
  doi = {10.48550/ARXIV.2211.14275},
  url = {https://arxiv.org/abs/2211.14275},
  author = {Uesato, Jonathan and Kushman, Nate and Kumar, Ramana and Song, Francis and Siegel, Noah and Wang, Lisa and Creswell, Antonia and Irving, Geoffrey and Higgins, Irina},
  title = {Solving math word problems with process- and outcome-based feedback},
  publisher = {arXiv},
  year = {2022}
}

@inproceedings{lightman2023lets_verify,
  title={Let's verify step by step},
  author={Lightman, Hunter and Kosaraju, Vineet and Burda, Yuri and Edwards, Harrison and Baker, Bowen and Lee, Teddy and Leike, Jan and Schulman, John and Sutskever, Ilya and Cobbe, Karl},
  booktitle={The twelfth international conference on learning representations},
  year={2023}
}

@misc{yao2025arctic_text2sql_r1,
  doi = {10.48550/ARXIV.2505.20315},
  url = {https://arxiv.org/abs/2505.20315},
  author = {Yao, Zhewei and Sun, Guoheng and Borchmann, Lukasz and Nuti, Gaurav and Shen, Zheyu and Deng, Minghang and Zhai, Bohan and Zhang, Hao and Li, Ang and He, Yuxiong},
  title = {Arctic-Text2SQL-R1: Simple Rewards, Strong Reasoning in Text-to-SQL},
  publisher = {arXiv},
  year = {2025}
}

@misc{zhang2025reward_sql,
  doi = {10.48550/arXiv.2505.04671},
  url = {https://arxiv.org/abs/2505.04671},
  author = {Zhang, Yuxin and Fan, Meihao and Fan, Ju and Yi, Mingyang and Luo, Yuyu and Tan, Jian and Li, Guoliang},
  title = {Reward-{SQL}: Boosting Text-to-{SQL} via Stepwise Reasoning and Process-Supervised Rewards},
  publisher = {arXiv},
  year = {2025}
}

@misc{chen2025sqlcritic,
  doi = {10.48550/arXiv.2503.07996},
  url = {https://arxiv.org/abs/2503.07996},
  author = {Chen, Jikai and Gan, Leilei and Zhao, Ziyu and Wang, Zechuan and Wang, Dong and Zhuang, Chenyi},
  title = {{SQLCritic}: Correcting Text-to-{SQL} Generation via Clause-wise Critic},
  publisher = {arXiv},
  year = {2025}
}

@inproceedings{weng-etal-2025-graph,
  title = {Graph-Reward-{SQL}: Execution-Free Reinforcement Learning for Text-to-{SQL} via Graph Matching and Stepwise Reward},
  author = {Weng, Han and Wu, Puzhen and Longjie, Cui and Zhan, Yi and Liu, Boyi and Song, Yuanfeng and Zeng, Dun and Yang, Yingxiang and Zhang, Qianru and Huang, Dong and Yin, Xiaoming and Sun, Yang and Chen, Xing},
  booktitle = {Findings of the Association for Computational Linguistics: EMNLP 2025},
  month = nov,
  year = {2025},
  address = {Suzhou, China},
  publisher = {Association for Computational Linguistics},
  url = {https://aclanthology.org/2025.findings-emnlp.694/},
  doi = {10.18653/v1/2025.findings-emnlp.694},
  pages = {12917--12943},
  ISBN = {979-8-89176-335-7}
}

@article{hong2025next,
  title={Next-generation database interfaces: A survey of llm-based text-to-sql},
  author={Hong, Zijin and Yuan, Zheng and Zhang, Qinggang and Chen, Hao and Dong, Junnan and Huang, Feiran and Huang, Xiao},
  journal={IEEE Transactions on Knowledge and Data Engineering},
  year={2025},
  publisher={IEEE}
}

@inproceedings{lei2024spider2,
title={Spider 2.0: Evaluating Language Models on Real-World Enterprise Text-to-{SQL} Workflows},
author={Fangyu Lei and Jixuan Chen and Yuxiao Ye and Ruisheng Cao and Dongchan Shin and Hongjin SU and ZHAOQING SUO and Hongcheng Gao and Wenjing Hu and Pengcheng Yin and Victor Zhong and Caiming Xiong and Ruoxi Sun and Qian Liu and Sida Wang and Tao Yu},
booktitle={The Thirteenth International Conference on Learning Representations},
year={2025},
url={https://openreview.net/forum?id=XmProj9cPs}
}

@inproceedings{liswe,
	author = {Li, Jinyang and Li, Xiaolong and Qu, Ge and Jacobsson, Per and Qin, Bowen and Hui, Binyuan and Si, Shuzheng and Huo, Nan and Xu, Xiaohan and Zhang, Yue and Tang, Ziwei and Li, Yuanshuai and Widjaja, Florensia and Zhu, Xintong and Zhou, Feige and Huang, Yongfeng and Papakonstantinou, Yannis and Ozcan, Fatma and Chenhao, Ma and Cheng, Reynold},
	booktitle = {Advances in Neural Information Processing Systems},
	pages = {97085--97120},
	title = {SWE-SQL: Illuminating LLM Pathways to Solve User SQL Issues in Real-World Applications},
	url = {https://proceedings.neurips.cc/paper_files/paper/2025/file/8bfbf4ec87e1e331f0b1adc483b53b6b-Paper-Conference.pdf},
	volume = {38},
	year = {2025},
}

@inproceedings{huobird,
title={{BIRD}-{INTERACT}: Re-imagining Text-to-{SQL} Evaluation via Lens of Dynamic Interactions},
author={Nan Huo and Xiaohan Xu and Jinyang Li and Per Jacobsson and Shipei Lin and Bowen Qin and Binyuan Hui and Xiaolong Li and Ge Qu and Shuzheng Si and Linheng Han and Edward Alexander and Xintong Zhu and Rui Qin and Ruihan Yu and Yiyao Jin and Feige Zhou and Weihao Zhong and Yun Chen and Hongyu Liu and Chenhao Ma and Fatma Ozcan and Yannis Papakonstantinou and Reynold Cheng},
booktitle={The Fourteenth International Conference on Learning Representations},
year={2026},
url={https://openreview.net/forum?id=nHrYBGujps}
}

@article{luo2020deepeye,
  title={DeepEye: A Data Science System for Monitoring and Exploring COVID-19 Data.},
  author={Luo, Yuyu and Tang, Nan and Li, Guoliang and Li, Wenbo and Zhao, Tianyu and Yu, Xiang},
  journal={IEEE Data Eng. Bull.},
  volume={43},
  number={2},
  pages={121--132},
  year={2020}
}

@article{luo2020steerable,
  title={Steerable self-driving data visualization},
  author={Luo, Yuyu and Qin, Xuedi and Chai, Chengliang and Tang, Nan and Li, Guoliang and Li, Wenbo},
  journal={IEEE Transactions on Knowledge and Data Engineering},
  volume={34},
  number={1},
  pages={475--490},
  year={2020},
  publisher={IEEE}
}

@inproceedings{maddela2025starqa,
  title={STARQA: A Question Answering Dataset for Complex Analytical Reasoning over Structured Databases},
  author={Maddela, Mounica and Xie, Lingjue and Preo{\c{t}}iuc-Pietro, Daniel and others},
  booktitle={Proceedings of the 2025 Conference on Empirical Methods in Natural Language Processing},
  pages={34475--34487},
  year={2025}
}

@inproceedings{chen2025adept,
    title = "{ADEPT}-{SQL}: A High-performance Text-to-{SQL} Application for Real-World Enterprise-Level Databases",
    author = "Chen, Yongnan  and
      Chang, Zhuo  and
      Gu, Shijia  and
      Zong, Yuanhang  and
      Zhang, Mei  and
      Wang, Shiyu  and
      He, Zixiang  and
      Chen, HongZhi  and
      Jin, Wei  and
      Cui, Bin",
    editor = "Mishra, Pushkar  and
      Muresan, Smaranda  and
      Yu, Tao",
    booktitle = "Proceedings of the 63rd Annual Meeting of the Association for Computational Linguistics (Volume 3: System Demonstrations)",
    month = jul,
    year = "2025",
    address = "Vienna, Austria",
    publisher = "Association for Computational Linguistics",
    url = "https://aclanthology.org/2025.acl-demo.27/",
    doi = "10.18653/v1/2025.acl-demo.27",
    pages = "275--283",
    ISBN = "979-8-89176-253-4",
}

@inproceedings{xu2025abacus,
  title={Abacus-SQL: a text-to-SQL system empowering cross-domain and open-domain database retrieval},
  author={Xu, Keyan and Wang, Dingzirui and Zhang, Xuanliang and Zhu, Qingfu and Che, Wanxiang},
  booktitle={Proceedings of the 63rd Annual Meeting of the Association for Computational Linguistics (Volume 3: System Demonstrations)},
  pages={118--128},
  year={2025}
}

@inproceedings{zeng2020photon,
    title = "{P}hoton: A Robust Cross-Domain Text-to-{SQL} System",
    author = "Zeng, Jichuan  and
      Lin, Xi Victoria  and
      Hoi, Steven C.H.  and
      Socher, Richard  and
      Xiong, Caiming  and
      Lyu, Michael  and
      King, Irwin",
    editor = "Celikyilmaz, Asli  and
      Wen, Tsung-Hsien",
    booktitle = "Proceedings of the 58th Annual Meeting of the Association for Computational Linguistics: System Demonstrations",
    month = jul,
    year = "2020",
    address = "Online",
    publisher = "Association for Computational Linguistics",
    url = "https://aclanthology.org/2020.acl-demos.24/",
    doi = "10.18653/v1/2020.acl-demos.24",
    pages = "204--214",
}

@inproceedings{sheinin2018quest,
  title={Quest: a natural language interface to relational databases},
  author={Sheinin, Vadim and Khorashani, Elahe and Yeo, Hangu and Xu, Kun and Vo, Ngoc Phuoc An and Popescu, Octavian},
  booktitle={Proceedings of the Eleventh International Conference on Language Resources and Evaluation (LREC 2018)},
  year={2018}
}

@inproceedings{wang2025linkalign,
  title={Linkalign: Scalable schema linking for real-world large-scale multi-database text-to-sql},
  author={Wang, Yihan and Liu, Peiyu and Yang, Xin},
  booktitle={Proceedings of the 2025 Conference on Empirical Methods in Natural Language Processing},
  pages={977--991},
  year={2025}
}

@inproceedings{gan-etal-2023-appraising,
    title = "Re-appraising the Schema Linking for Text-to-{SQL}",
    author = "Gan, Yujian  and
      Chen, Xinyun  and
      Purver, Matthew",
    editor = "Rogers, Anna  and
      Boyd-Graber, Jordan  and
      Okazaki, Naoaki",
    booktitle = "Findings of the Association for Computational Linguistics: ACL 2023",
    month = jul,
    year = "2023",
    address = "Toronto, Canada",
    publisher = "Association for Computational Linguistics",
    url = "https://aclanthology.org/2023.findings-acl.53/",
    doi = "10.18653/v1/2023.findings-acl.53",
    pages = "835--852",
}

@inproceedings{kothyari2023crush4sql,
    title = "{CRUSH}4{SQL}: Collective Retrieval Using Schema Hallucination For {T}ext2{SQL}",
    author = "Kothyari, Mayank  and
      Dhingra, Dhruva  and
      Sarawagi, Sunita  and
      Chakrabarti, Soumen",
    editor = "Bouamor, Houda  and
      Pino, Juan  and
      Bali, Kalika",
    booktitle = "Proceedings of the 2023 Conference on Empirical Methods in Natural Language Processing",
    year = "2023",
    address = "Singapore",
    publisher = "Association for Computational Linguistics",
    url = "https://aclanthology.org/2023.emnlp-main.868/",
    doi = "10.18653/v1/2023.emnlp-main.868",
    pages = "14054--14066",
}

@inproceedings{wu2025ucs,
  title={UCS-SQL: uniting content and structure for enhanced semantic bridging in text-to-sql},
  author={Wu, Zhenhe and Li, Zhongqiu and Zhang, Jie and He, Zhongjiang and Yang, Jian and Zhao, Yu and Fang, Ruiyu and Wang, Bing and Xie, Hongyan and Song, Shuangyong and others},
  booktitle={Findings of the Association for Computational Linguistics: ACL 2025},
  pages={8156--8168},
  year={2025}
}

@inproceedings{xu2025ts,
  title={TS-SQL: Test-driven Self-refinement for Text-to-SQL},
  author={Xu, Wenbo and Zhu, Haifeng and Yan, Liang and Liu, Chuanyi and Han, Peiyi and Duan, Shaoming and Pan, Jeff Z},
  booktitle={Findings of the Association for Computational Linguistics: EMNLP 2025},
  pages={2864--2889},
  year={2025}
}

@inproceedings{qu2025share,
  title={SHARE: An SLM-based hierarchical action CorREction assistant for text-to-SQL},
  author={Qu, Ge and Li, Jinyang and Qin, Bowen and Li, Xiaolong and Huo, Nan and Ma, Chenhao and Cheng, Reynold},
  booktitle={Proceedings of the 63rd Annual Meeting of the Association for Computational Linguistics (Volume 1: Long Papers)},
  pages={11268--11292},
  year={2025}
}

@inproceedings{ma2025sql,
  title={SQL-R1: Training natural language to sql reasoning model by reinforcement learning},
  author={Peixian, Ma and Zhuang, Xialie and Xu, Chengjin and Jiang, Xuhui and Chen, Ran and Guo, Jian},
  booktitle={The Thirty-ninth Annual Conference on Neural Information Processing Systems},
  year={2025}
  }

@inproceedings{wolfson-etal-2022-weakly,
  title = "Weakly Supervised Text-to-{SQL} Parsing through Question Decomposition",
  author = "Wolfson, Tomer and Deutch, Daniel and Berant, Jonathan",
  booktitle = "Findings of the Association for Computational Linguistics: NAACL 2022",
  month = jul,
  year = "2022",
  address = "Seattle, United States",
  publisher = "Association for Computational Linguistics",
  url = "https://aclanthology.org/2022.findings-naacl.193/",
  doi = "10.18653/v1/2022.findings-naacl.193",
  pages = "2528--2542"
}

@inproceedings{zhang-etal-2025-exesql,
  title = "{E}xe{SQL}: Self-Taught Text-to-{SQL} Models with Execution-Driven Bootstrapping for {SQL} Dialects",
  author = "Zhang, Jipeng and Yang, Haolin and Miao, Kehao and Zhang, Ruiyuan and Pi, Renjie and Gao, Jiahui and Zhou, Xiaofang",
  booktitle = "Findings of the Association for Computational Linguistics: EMNLP 2025",
  month = nov,
  year = "2025",
  address = "Suzhou, China",
  publisher = "Association for Computational Linguistics",
  url = "https://aclanthology.org/2025.findings-emnlp.1320/",
  doi = "10.18653/v1/2025.findings-emnlp.1320",
  pages = "24305--24326",
  ISBN = "979-8-89176-335-7"
}

@article{li2025omnisql,
  title={OmniSQL: Synthesizing High-Quality Text-to-SQL Data at Scale},
  author={Li, Haoyang and Wu, Shang and Zhang, Xiaokang and Huang, Xinmei and Zhang, Jing and Jiang, Fuxin and Wang, Shuai and Zhang, Tieying and Chen, Jianjun and Shi, Rui and others},
  journal={Proceedings of the VLDB Endowment},
  volume={18},
  number={11},
  pages={4695--4709},
  year={2025},
  publisher={VLDB Endowment}
}

@misc{qwen35,
  author = {QwenTeam},
  title = {Qwen3.5: Towards Native Multimodal Agents},
  year = {2026},
  url = {https://qwen.ai/blog?id=qwen3.5},
}

@article{comanici2025gemini,
  title={Gemini 2.5: Pushing the frontier with advanced reasoning, multimodality, long context, and next generation agentic capabilities},
  author={Comanici, Gheorghe and Bieber, Eric and Schaekermann, Mike and Pasupat, Ice and Sachdeva, Noveen and Dhillon, Inderjit and Blistein, Marcel and Ram, Ori and Zhang, Dan and Rosen, Evan and others},
  journal={arXiv preprint arXiv:2507.06261},
  year={2025}
}

@misc{claude,
  author = {Anthropic},
  title = {{Claude Sonnet 4}},
  year = {2025},
  url = {https://www.anthropic.com/news/claude-4}
}

@misc{openai2026gpt54,
  author = {OpenAI},
  title = {Introducing {GPT}-5.4},
  year = {2026},
  month = mar,
  url = {https://openai.com/index/introducing-gpt-5-4/}
}

@misc{anthropic2026sonnet46,
  author = {Anthropic},
  title = {Introducing {Claude Sonnet 4.6}},
  year = {2026},
  month = feb,
  url = {https://www.anthropic.com/research/claude-sonnet-4-6}
}

@article{gao2024xiyan_sql,
  author={Liu, Yifu and Zhu, Yin and Gao, Yingqi and Luo, Zhiling and Li, Xiaoxia and Shi, Xiaorong and Hong, Yuntao and Gao, Jinyang and Li, Yu and Ding, Bolin and Zhou, Jingren},
  journal={IEEE Transactions on Knowledge and Data Engineering}, 
  title={XiYan-SQL: A Novel Multi-Generator Framework for Text-to-SQL}, 
  year={2026},
  volume={},
  number={},
  pages={1-14},
  doi={10.1109/TKDE.2026.3657851}
}

@inproceedings{li2025alpha_sql,
  title = {Alpha-{SQL}: Zero-Shot Text-to-{SQL} using {M}onte {C}arlo Tree Search},
  author = {Li, Boyan and Zhang, Jiayi and Fan, Ju and Xu, Yanwei and Chen, Chong and Tang, Nan and Luo, Yuyu},
  booktitle = {Proceedings of the 42nd International Conference on Machine Learning},
  pages = {36810--36830},
  year = {2025},
  volume = {267},
  series = {Proceedings of Machine Learning Research},
  publisher = {PMLR},
  url = {https://proceedings.mlr.press/v267/li25dt.html}
}

@article{li2025deepeye_sql,
  author = {Li, Boyan and Chen, Chong and Xue, Zhujun and Mei, Yinan and Luo, Yuyu},
  title = {DeepEye-SQL: A Software-Engineering-Inspired Text-to-SQL Framework},
  year = {2025},
  eprint = {2510.17586},
  archivePrefix = {arXiv},
  primaryClass = {cs.DB},
  doi = {10.48550/arXiv.2510.17586},
  url = {https://arxiv.org/abs/2510.17586}
}
\balance
\newpage
\appendix
\section{Limitations}
\label{sec:limitations}
While our proposed method demonstrates significant improvements in Text-to-SQL reinforcement learning, there are several limitations to consider. First, the clause-level reward model is trained on a specific set of datasets and may not generalize well to other domains or SQL dialects without further fine-tuning.  Second, we focus on SQL Generation rather than Chain of Thought (CoT) reasoning, and our method may not directly apply to other structured generation tasks that require different forms of intermediate supervision. Third, {\method} currently decomposes SQL only at the top-level clause granularity, which can still be coarse when the decisive error lies inside a clause, such as one predicate in a conjunctive \texttt{WHERE} condition or an internal subquery; although the same annotation procedure can in principle be applied recursively to obtain hierarchical rewards at the clause, sub-clause, and token levels, we restrict this work to the first level to keep the experimental cost manageable. Finally, the automated annotation pipeline introduces additional computational and engineering overhead because it requires SFT rollouts, AST parsing, counterfactual repair or fault injection, and repeated execution checks; this cost is higher than using sparse terminal execution rewards alone, although the resulting annotations can be reused for PRM training.

\section{Theoretical Analysis of Reward Propagation}
\label{app:reward_propagation_theory}

This section gives deterministic bounds for the topology-aware reward in \Cref{eq:node_reward} and the clause-level aggregation in \Cref{eq:clause_reward}. These results do not require any distributional assumption; they follow directly from the bounded mask, nonnegative AST distance, and the monotonicity of the Gaussian kernel.

\begin{lemma}[Bounds of Gaussian node propagation]
\label{lem:gaussian_node_bounds}
Let $\alpha>0$, $\sigma_d>0$, $M_i=M(\hat v_i,v^\star)\in\{0,1\}$, and $d_i=d_{\hat G}(\hat v_i,v^\star)\ge 0$. The node-level reward
\[
R_{\mathrm{node}}(\hat v_i;v^\star)
=
-\alpha M_i\exp\left(-\frac{d_i^2}{2\sigma_d^2}\right)
\]
satisfies
\[
-\alpha \le R_{\mathrm{node}}(\hat v_i;v^\star)\le 0.
\]
If $M_i=0$, then $R_{\mathrm{node}}(\hat v_i;v^\star)=0$. If $M_i=1$ and $d_i\le D$, then
\[
-\alpha
\le
R_{\mathrm{node}}(\hat v_i;v^\star)
\le
-\alpha\exp\left(-\frac{D^2}{2\sigma_d^2}\right).
\]
Moreover, for unmasked nodes, $R_{\mathrm{node}}$ is monotone nondecreasing in $d_i$, equivalently the error severity $-R_{\mathrm{node}}$ is monotone nonincreasing in $d_i$.
\end{lemma}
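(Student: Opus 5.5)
The plan is to write the reward as a product and control each factor separately. Set $g(d)=\exp\!\left(-d^2/(2\sigma_d^2)\right)$, so that $R_{\mathrm{node}}(\hat v_i;v^\star)=-\alpha M_i\, g(d_i)$. Since $\sigma_d>0$ and $d_i\ge 0$, the exponent $-d_i^2/(2\sigma_d^2)$ lies in $(-\infty,0]$. Because $\exp$ is strictly increasing and positive, this gives $0<g(d_i)\le 1$. Combining this with $M_i\in\{0,1\}$ yields $0\le M_i g(d_i)\le 1$. Multiplying by $-\alpha<0$ reverses the inequalities, which gives the global bound $-\alpha\le R_{\mathrm{node}}\le 0$.

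Next I split on the mask. If $M_i=0$, the product vanishes identically and $R_{\mathrm{node}}=0$. If $M_i=1$, then $R_{\mathrm{node}}=-\alpha g(d_i)$. The lower bound $-\alpha$ follows from $g\le 1$, as above. For the upper bound, I use that $g$ is nonincreasing on $[0,\infty)$: the map $d\mapsto d^2$ is increasing there, and $t\mapsto\exp(-t/(2\sigma_d^2))$ is decreasing. Hence $d_i\le D$ implies $g(d_i)\ge g(D)$. Multiplying by $-\alpha$ then gives $R_{\mathrm{node}}\le -\alpha\exp\!\left(-D^2/(2\sigma_d^2)\right)$.

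The monotonicity claim for unmasked nodes uses the same fact. Let $0\le d\le d'$. Then $g(d)\ge g(d')$, so $-\alpha g(d)\le -\alpha g(d')$. This says $R_{\mathrm{node}}$ is nondecreasing in $d_i$, and therefore the severity $-R_{\mathrm{node}}=\alpha g(d_i)$ is nonincreasing in $d_i$.

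There is no real obstacle in this argument. The only point requiring care is the sign reversal each time an inequality is multiplied by $-\alpha$. I would also note that $D\ge 0$ is implicitly needed, which holds whenever some $d_i\le D$ exists.
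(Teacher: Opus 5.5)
Your proof is correct and follows the paper's argument. Like the paper, you bound the Gaussian factor in $(0,1]$, combine it with the mask, multiply by $-\alpha$, and use that the kernel decreases in distance to get the local bound. The only cosmetic difference is in the monotonicity claim: the paper computes the derivative $\partial R_{\mathrm{node}}/\partial d_i = (\alpha d_i/\sigma_d^2)\exp(-d_i^2/(2\sigma_d^2)) \ge 0$, whereas you compare $g(d)\ge g(d')$ directly by composing monotone maps.
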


\begin{proof}
Since $d_i\ge 0$ and $\sigma_d>0$,
\[
0<\exp\left(-\frac{d_i^2}{2\sigma_d^2}\right)\le 1.
\]
Together with $M_i\in\{0,1\}$, this gives
\[
0
\le
M_i\exp\left(-\frac{d_i^2}{2\sigma_d^2}\right)
\le
1.
\]
Multiplying by $-\alpha$ yields $-\alpha \le R_{\mathrm{node}}(\hat v_i;v^\star)\le 0$. If $M_i=0$, the reward is exactly zero. If $M_i=1$ and $d_i\le D$, the Gaussian term is decreasing in $d_i$, so
\[
\exp\left(-\frac{D^2}{2\sigma_d^2}\right)
\le
\exp\left(-\frac{d_i^2}{2\sigma_d^2}\right)
\le
1.
\]
Multiplying by $-\alpha$ gives the stated local bound. Finally, when $M_i=1$,
\[
\frac{\partial R_{\mathrm{node}}}{\partial d_i}
=
\frac{\alpha d_i}{\sigma_d^2}
\exp\left(-\frac{d_i^2}{2\sigma_d^2}\right)
\ge 0,
\]
so the negative reward becomes weaker as the node moves farther from the anchor, while the corresponding severity $-R_{\mathrm{node}}$ decreases with distance.
\end{proof}

\begin{lemma}[Bounds of minimum clause aggregation]
\label{lem:min_clause_bounds}
For a nonempty clause unit $u$, define the active node set
\[
\mathcal{A}(u;v^\star)
=
\{\hat v_i\in\phi(u): M(\hat v_i,v^\star)=1\}.
\]
The clause-level reward in \Cref{eq:clause_reward} satisfies
\[
-\alpha
\le
R_{\mathrm{clause}}^{-}(u;v^\star)
\le
0.
\]
If $\mathcal{A}(u;v^\star)=\emptyset$, then $R_{\mathrm{clause}}^{-}(u;v^\star)=0$. Otherwise, let
\[
d_{\min}(u)
=
\min_{\hat v_i\in\mathcal{A}(u;v^\star)}
d_{\hat G}(\hat v_i,v^\star),
\qquad
D_u
=
\max_{\hat v_i\in\mathcal{A}(u;v^\star)}
d_{\hat G}(\hat v_i,v^\star).
\]
Then the minimum aggregation has the exact form
\[
R_{\mathrm{clause}}^{-}(u;v^\star)
=
-\alpha
\exp\left(-\frac{d_{\min}(u)^2}{2\sigma_d^2}\right),
\]
and therefore
\[
-\alpha
\le
R_{\mathrm{clause}}^{-}(u;v^\star)
\le
-\alpha
\exp\left(-\frac{D_u^2}{2\sigma_d^2}\right)
\le 0.
\]
\end{lemma}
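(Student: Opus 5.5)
The proof reduces to \Cref{lem:gaussian_node_bounds} applied nodewise, plus the fact that the minimum of a finite set is bounded by any bound valid for all its elements. Throughout, $\phi(u)$ is a finite nonempty node set, since $u$ is nonempty and the AST is finite. The minimum in \Cref{eq:clause_reward} is therefore attained.

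\textbf{Global bounds.} For every $\hat v_i\in\phi(u)$, \Cref{lem:gaussian_node_bounds} gives $-\alpha\le R_{\mathrm{node}}(\hat v_i;v^\star)\le 0$. Taking the minimum over a finite nonempty set preserves both bounds, which yields $-\alpha\le R_{\mathrm{clause}}^{-}(u;v^\star)\le 0$.

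\textbf{Empty active set.} If $\mathcal{A}(u;v^\star)=\emptyset$, every node in $\phi(u)$ has $M_i=0$. Each node reward is then exactly $0$ by the same lemma, so the minimum is $0$.

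\textbf{Nonempty active set.} Split $\phi(u)$ into $\mathcal{A}(u;v^\star)$ and its complement. Nodes in the complement contribute $0$, while active nodes contribute $-\alpha\exp(-d_i^2/(2\sigma_d^2))<0$. Hence the overall minimum equals the minimum over $\mathcal{A}(u;v^\star)$ alone, because every active value is strictly below $0$. On $\mathcal{A}(u;v^\star)$ the map $d\mapsto -\alpha\exp(-d^2/(2\sigma_d^2))$ is nondecreasing in $d\ge 0$, by the monotonicity statement of \Cref{lem:gaussian_node_bounds}. Its minimum over the finite active set is therefore attained at the node realizing $d_{\min}(u)$. This gives the exact form
\[
R_{\mathrm{clause}}^{-}(u;v^\star)=-\alpha\exp\!\left(-\frac{d_{\min}(u)^2}{2\sigma_d^2}\right).
\]

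\textbf{Sandwich bound.} We have $0\le d_{\min}(u)\le D_u$. The Gaussian factor lies in $[\exp(-D_u^2/(2\sigma_d^2)),1]$, and multiplying by $-\alpha$ gives the stated chain of inequalities. Equivalently, this is the local bound of \Cref{lem:gaussian_node_bounds} applied at the minimizing node with $D=D_u$.

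\textbf{Main obstacle.} The only delicate step is justifying that masked zeros cannot be the minimum when $\mathcal{A}(u;v^\star)\neq\emptyset$. This needs strict positivity of the exponential together with $\alpha>0$, so that every active value is strictly negative. Everything else is routine finiteness and monotonicity.
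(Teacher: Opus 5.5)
Your proof is correct and follows the paper's argument step for step: nodewise bounds from \Cref{lem:gaussian_node_bounds}, zero rewards on masked nodes, reduction of the minimum to the active set, monotonicity of the Gaussian to place the minimizer at $d_{\min}(u)$, and $d_{\min}(u)\le D_u$ for the sandwich bound. One small correction: the step you flag as the main obstacle does not need strict negativity. Writing $m=\min_{\hat v_i\in\mathcal{A}(u;v^\star)}R_{\mathrm{node}}(\hat v_i;v^\star)$, nonpositivity of the active values gives $m\le 0$, and then $\min\{m,0\}=m$ already holds; this is how the paper phrases it.
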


\begin{proof}
By \Cref{lem:gaussian_node_bounds}, every node reward lies in $[-\alpha,0]$. The minimum over the nonempty set $\phi(u)$ must therefore also lie in $[-\alpha,0]$, proving the global bound.

If $\mathcal{A}(u;v^\star)=\emptyset$, then every node in the clause has mask value zero, hence every node reward is zero and the minimum is zero. If $\mathcal{A}(u;v^\star)\neq\emptyset$, inactive nodes have reward zero and cannot determine the minimum because active nodes have nonpositive rewards. Thus
\[
R_{\mathrm{clause}}^{-}(u;v^\star)
=
\min_{\hat v_i\in\mathcal{A}(u;v^\star)}
\left[
-\alpha
\exp\left(-\frac{d_{\hat G}(\hat v_i,v^\star)^2}{2\sigma_d^2}\right)
\right].
\]
The Gaussian term is decreasing in distance, and the leading negative sign means the most negative reward is achieved by the active node closest to $v^\star$. Therefore the minimum equals
\[
-\alpha
\exp\left(-\frac{d_{\min}(u)^2}{2\sigma_d^2}\right).
\]
Since $d_{\min}(u)\le D_u$, the same monotonicity gives the stated upper bound. Equivalently, if $s_i=-R_{\mathrm{node}}(\hat v_i;v^\star)\ge 0$ denotes node-level error severity, then
\[
-R_{\mathrm{clause}}^{-}(u;v^\star)
=
\max_{\hat v_i\in\phi(u)} s_i,
\]
so minimum reward aggregation is exactly maximum severity aggregation over nodes in the clause. This preserves the decisive local error rather than diluting it by clause length.
\end{proof}

\section{Algorithm Details}
The details of the annotation pipeline are shown in \Cref{alg:clause_annotation}.

\begin{algorithm}[!ht]
\caption{Automatic Clause-Level Reward Annotation}
\label{alg:clause_annotation}
\small
\renewcommand{\algorithmicrequire}{\textbf{Input:}}
\renewcommand{\algorithmicensure}{\textbf{Output:}}
\begin{algorithmic}[1]
\REQUIRE $\mathfrak{T}^{-},\ \mathfrak{T}^{+},\ \mu,\ \phi$
\ENSURE $\mathcal{D}$
\STATE $\mathcal{D}^{-}\leftarrow\emptyset,\quad \mathcal{D}^{+}\leftarrow\emptyset$
\STATE \textcolor{teal}{/* Part 1: Fault Injection on $\mathfrak{T}^{+}$ */}
\FOR{$\tau\in\mathfrak{T}^{+}$}
    \STATE $G\leftarrow \mathrm{Parse}(\hat y),\quad \mathcal{V}_{\mathrm{edit}}\leftarrow \mathrm{SampleEditable}(G)$
    \FOR{$v^\star\in\mathcal{V}_{\mathrm{edit}}$}
        \STATE $G'\leftarrow \mu(G,v^\star)$
        \IF{$\mathrm{Exec}(\mathrm{Decode}(G'))\neq \mathrm{Exec}(\mathrm{Decode}(G))$}
            \STATE $k\leftarrow \mathrm{ClauseIndex}(v^\star;\phi),\quad u_k^{+}\leftarrow [\mathrm{ClauseSplit}(\hat y)]_k,\quad u_k^{-}\leftarrow [\mathrm{ClauseSplit}(\mathrm{Decode}(G'))]_k$
            \STATE $\delta_k\leftarrow -R_{\mathrm{clause}}^{-}(u_k^{-};v^\star)$
            \STATE $\mathcal{D}^{+}\leftarrow \mathcal{D}^{+}\cup\{(x,p_k,u_k^{+},u_k^{-},\kappa_k,\delta_k)\}$
        \ENDIF
    \ENDFOR
\ENDFOR
\STATE \textcolor{teal}{/* Part 2: Counterfactual Intervention on $\mathfrak{T}^{-}$ */}
\FOR{$\tau\in\mathfrak{T}^{-}$}
    \STATE $x\leftarrow x(\tau),\ \hat y\leftarrow y(\tau),\ \tilde y\leftarrow y^{\mathrm{gt}}(\tau)$
    \STATE $\mathcal{K}_{\mathrm{div}}\leftarrow \{k:\ u_k\neq \tilde u_k\}$
    \FOR{$k^\star\in\mathcal{K}_{\mathrm{div}}$}
        \STATE $y^{\mathrm{cf}}\leftarrow \mathrm{Repair}(\hat y,\tilde y,p_{k^\star})$
        \IF{$\mathrm{Valid}(y^{\mathrm{cf}})$}
            \STATE $v^\star\leftarrow \mathrm{FirstDivergence}(\mathrm{Parse}(\hat y),\mathrm{Parse}(y^{\mathrm{cf}}))$
            \STATE $\tilde u_{k^\star}^{+}\leftarrow [\mathrm{ClauseSplit}(y^{\mathrm{cf}})]_{k^\star},\quad u_{k^\star}^{-}\leftarrow [\mathrm{ClauseSplit}(\hat y)]_{k^\star}$
            \STATE $\delta_{k^\star}\leftarrow -R_{\mathrm{clause}}^{-}(u_{k^\star}^{-};v^\star)$
            \STATE $\mathcal{D}^{-}\leftarrow \mathcal{D}^{-}\cup\{(x,p_{k^\star},\tilde u_{k^\star}^{+},u_{k^\star}^{-},\kappa_{k^\star},\delta_{k^\star})\}$
        \ENDIF
    \ENDFOR
\ENDFOR
\STATE \textbf{return} $\mathcal{D}\leftarrow \mathcal{D}^{-}\cup\mathcal{D}^{+}$
\end{algorithmic}
\end{algorithm}

\noindent{\textbf{Complexity Analysis.}}
Let $M=|\mathfrak{T}^{+}|+|\mathfrak{T}^{-}|$ be the number of trajectories to annotate, and let $N$ upper-bound the SQL length, number of clause units, AST size, and tokenized Clause-PRM input length for any trajectory. Clause splitting, AST traversal, fault injection, and clause-to-node reward aggregation are linear in $N$. The only super-linear algorithmic component is \textsc{FirstDivergence}, which performs clause-restricted edit-script computation; using a standard tree-edit dynamic program, this step costs $O(N^3)$ in the worst case. For database execution, let $D$ upper-bound the number of rows in any table and let $J$ be the maximum number of joined tables in a generated query. Without assuming a particular index or query plan, one execution has worst-case cost $O(D^J)$. Therefore, the worst-case offline annotation complexity is
\[
O\!\left(M(N^3 + D^J)\right).
\]
The storage complexity is $O(MN)$ when storing the retained clause preference tuples explicitly, or $O(M)$ if the tuples store references to the original SQL strings. During policy optimization, a rollout group of size $B$ contains at most $BN$ clause boundaries. A Transformer Clause-PRM forward pass over an $O(N)$-length input costs $O(N^2)$ when the model architecture is fixed, so reward scoring and terminal execution together add
\[
O\!\left(B(N^3 + D^J)\right),
\]
where the $BN^3$ term comes from $BN$ clause-boundary PRM calls and the $BD^J$ term comes from one terminal execution per rollout. In practice, clause-boundary PRM scores are batched, which improves wall-clock efficiency without changing the asymptotic bound.

\section{Experimental Setup Details}
\label{app:exp_details}

\noindent{\textbf{\method\texttt{-9B} Data Split.}} We train {\method}\texttt{-9B} on the final clause-annotated training split and use a held-out evaluation split for model selection and failure-localization evaluation. As shown in \Cref{tab:caper9b_data_stats}, the final split contains $82{,}081$ training rows and $9{,}166$ held-out evaluation rows, drawn from BIRD-train, SYNSQL-5k-train, and Spider-dev.

\begin{table}[ht]
\centering
\caption{{\method}\texttt{-9B} training and held-out evaluation data statistics.}
\label{tab:caper9b_data_stats}
\small
\begin{tabular}{lrrrr}
\toprule
Split & Total Rows & BIRD-Train & SYNSQL-5k-Train & Spider-Dev \\
\midrule
Train & $82{,}081$ & $58{,}090$ & $19{,}668$ & $4{,}323$ \\
Eval & $9{,}166$ & $6{,}488$ & $2{,}197$ & $481$ \\
\bottomrule
\end{tabular}
\end{table}

\noindent{\textbf{Training Configuration.}} We fine-tune \texttt{Qwen3.5-9B} on the clause-annotated dataset to obtain {\method}\texttt{-9B}. For policy optimization, we first train a supervised \texttt{Qwen3.5-9B-SFT} policy and then initialize GRPO from this checkpoint. We set the maximum penalty magnitude $\alpha$ in \Cref{eq:node_reward} to 1.0, the Gaussian distance bandwidth $\sigma_d$ in \Cref{eq:node_reward} to 1.0, and the balance factor $\lambda$ in \Cref{eq:prefix_reward} to 0.5. RL training uses batch size 16, 8 rollouts, and learning rate $1 \times 10^{-6}$.

\noindent{\textbf{\texttt{GRPO-Token} Baseline.}} \texttt{GRPO-Token} uses the same initialization checkpoint, GRPO objective, rollout number, batch size, learning rate, and terminal execution reward as \texttt{GRPO-Clause}, but replaces the learned clause-level process reward with heuristic token-level dense shaping. Specifically, after extracting the SQL from the \texttt{<answer>} field, we canonicalize keyword casing and whitespace, tokenize the generated and gold SQL strings, and assign a token process reward $\mathbbm{1}[\hat{s}_t=s_t^\star]$ at each SQL-token position $t$ when both positions exist, and $0$ otherwise. The terminal execution reward is still added only once at the final token. This baseline isolates the effect of making rewards dense at the token level without training an additional token-level PRM.

\noindent{\textbf{Inference and Metrics.}} For end-to-end evaluation, we report execution accuracy (EX) \cite{li2023bird}. Given $N$ evaluation examples with predicted SQL $\hat y_i$ and gold SQL $y_i$,
\begin{equation}
\mathrm{EX}=\frac{1}{N}\sum_{i=1}^{N}\mathbbm{1}\!\left[\mathrm{Exec}(\hat y_i)=\mathrm{Exec}(y_i)\right].
\end{equation}
Greedy decoding with temperature 0 corresponds to P@1. For Majority Vote@8, we sample candidates $\{\hat y_i^{(m)}\}_{m=1}^{8}$ with temperature 0.8, group them by execution result, and choose a candidate from the largest group:
\begin{equation}
r_i^{\mathrm{MV}}=\arg\max_{r}\sum_{m=1}^{8}\mathbbm{1}\!\left[\mathrm{Exec}(\hat y_i^{(m)})=r\right],
\qquad
\hat y_i^{\mathrm{MV}@8}\in\{\hat y_i^{(m)}:\mathrm{Exec}(\hat y_i^{(m)})=r_i^{\mathrm{MV}}\}.
\end{equation}
The reported MV@8 is $\mathrm{EX}$ computed with $\hat y_i=\hat y_i^{\mathrm{MV}@8}$. For candidate verification, we report the absolute gain over majority voting,
\begin{equation}
\Delta\mathrm{EX}=\mathrm{EX}(\hat y^{\mathrm{selector}})-\mathrm{EX}(\hat y^{\mathrm{MV}@8}).
\end{equation}
For failure localization, each method ranks the clause units in a failed trajectory. Let $\pi_i$ denote the predicted ranking from most to least likely faulty for held-out failure $i$, $k_i^\star$ denote the annotated faulty clause, and $\rho_i=\mathrm{rank}_{\pi_i}(k_i^\star)$. For $M$ held-out failures, we compute
\begin{equation}
\mathrm{Acc}_{\mathrm{loc}}=\frac{1}{M}\sum_{i=1}^{M}\mathbbm{1}[\rho_i=1],
\qquad
\mathrm{Hit@3}=\frac{1}{M}\sum_{i=1}^{M}\mathbbm{1}[\rho_i\le 3],
\qquad
\mathrm{MRR}=\frac{1}{M}\sum_{i=1}^{M}\frac{1}{\rho_i}.
\end{equation}

\noindent{\textbf{Full Baseline List.}} For end-to-end evaluation, we compare against representative Text-to-SQL models of varying scales, including \texttt{Gemini-2.5-Pro} \cite{comanici2025gemini}, \texttt{Claude Sonnet 4} \cite{claude}, \texttt{GPT-5.4} \cite{openai2026gpt54}, \texttt{Claude Sonnet 4.6} \cite{anthropic2026sonnet46}, \texttt{DeepEye-SQL} \cite{li2025deepeye_sql}, \texttt{OmniSQL-7B} \cite{li2025omnisql}, \texttt{SQL-R1-7B} \cite{ma2025sql}, \texttt{XiyanSQL-7B}~\cite{gao2024xiyan_sql}, \texttt{AlphaSQL} \cite{li2025alpha_sql}, \texttt{Qwen3.5-9B} \cite{qwen35}, \texttt{OmniSQL-14B}~\cite{li2025omnisql}, and \texttt{XiyanSQL-14B} \cite{gao2024xiyan_sql}. Within the backbone-matched policy block, we additionally compare the supervised \texttt{Qwen3.5-9B-SFT} policy and GRPO variants trained under sparse, clause-level, and token-level rewards, as listed in Table~\ref{tab:ex_benchmarks}. For failure localization, we compare \texttt{{\method}-9B} against four classes of baselines: (1) heuristic selectors, including \textsc{Random-Clause} and \textsc{Last-Clause}; (2) prompted self-debugging with \texttt{Qwen3.5-9B}, which directly predicts the faulty clause from the failed trajectory; (3) an execution-only reward model \texttt{Exec-RM-9B} trained from terminal execution labels without clause-level supervision; and (4) ablated variants of our approach, including \texttt{w/o Topology}, \texttt{w/o Fault Injection}, and \texttt{w/o Counterfactual Intervention}.

\noindent{\textbf{Environment.}} All experiments are conducted on a server with 8 NVIDIA A100 (80GB) GPUs. RL training is implemented with OpenRLHF\footnote{https://github.com/openrlhf/openrlhf}. We summarize the measured compute for the main local training and evaluation stages in \Cref{tab:compute_resources}; GPU-hours are computed as the number of GPUs multiplied by wall-clock hours.

\begin{table}[ht]
\centering
\caption{Compute resources for the main local training and evaluation stages.}
\label{tab:compute_resources}
\small
\begin{tabular}{lccc}
\toprule
Stage & Hardware & Wall-clock Time & GPU-hours \\
\midrule
Clause-PRM training & 8 A100 (80GB) & 4 h 39 m & 37.2 \\
GRPO policy optimization & 8 A100 (80GB) & 19 h 1 m 46 s & 152.2 \\
Evaluation & 2 A100 (80GB) & 2 h & 4.0 \\
\bottomrule
\end{tabular}
\end{table}

\noindent{\textbf{Existing Assets and Licenses.}} We use public datasets, model checkpoints, software, and API services under their stated licenses or access terms, summarized in \Cref{tab:existing_assets}. For derived artifacts released with this submission, the supplemental code package includes a README with asset links, preprocessing steps, release notes, and applicable terms.

\begin{table}[ht]
\centering
\caption{Existing assets used in this work and their licenses or access terms.}
\label{tab:existing_assets}
\scriptsize
\begin{tabular}{p{0.25\textwidth}p{0.30\textwidth}p{0.36\textwidth}}
\toprule
Asset & Use in this work & License / access terms \\
\midrule
BIRD~\cite{li2023bird} & Training and evaluation data & \href{https://bird-bench.github.io/}{CC BY-SA 4.0} \\
Spider~\cite{yu-etal-2018-spider} & Training and evaluation data & \href{https://github.com/taoyds/spider}{Apache-2.0} \\
SynSQL-Complex-5K~\cite{ma2025sql} & Synthetic training data & \href{https://huggingface.co/datasets/MPX0222forHF/SynSQL-Complex-5K}{Apache-2.0} \\
\texttt{Qwen3.5-9B}~\cite{qwen35} & Base policy and PRM backbone & \href{https://huggingface.co/Qwen/Qwen3.5-9B}{Apache-2.0} \\
\texttt{OmniSQL-7B/14B}~\cite{li2025omnisql} & Open-source Text-to-SQL baselines & Apache-2.0 Hugging Face releases: \href{https://huggingface.co/seeklhy/OmniSQL-7B}{7B}, \href{https://huggingface.co/seeklhy/OmniSQL-14B}{14B} \\
\texttt{SQL-R1-7B}~\cite{ma2025sql} & Open-source Text-to-SQL baseline & \href{https://huggingface.co/MPX0222forHF/SQL-R1-7B}{Apache-2.0} \\
\texttt{XiYanSQL-7B/14B}~\cite{gao2024xiyan_sql} & Open-source Text-to-SQL baselines & Apache-2.0 Hugging Face releases: \href{https://huggingface.co/XGenerationLab/XiYanSQL-QwenCoder-7B-2504}{7B}, \href{https://huggingface.co/XGenerationLab/XiYanSQL-QwenCoder-14B-2504}{14B} \\
\texttt{AlphaSQL}~\cite{li2025alpha_sql} & Baseline system and reported comparison & \href{https://github.com/HKUSTDial/Alpha-SQL}{MIT} \\
\texttt{DeepEye-SQL}~\cite{li2025deepeye_sql} & Baseline system and reported comparison & \href{https://github.com/HKUSTDial/DeepEye-SQL}{MIT} \\
OpenRLHF & RL training framework & \href{https://github.com/OpenRLHF/OpenRLHF}{Apache-2.0} \\
GPT, Claude, and Gemini APIs~\cite{openai2026gpt54,claude,anthropic2026sonnet46,comanici2025gemini} & Closed-source LLM baselines and candidate generation & Provider terms: \href{https://openai.com/policies/service-terms/}{OpenAI Service Terms}, \href{https://www.anthropic.com/legal/commercial-terms}{Anthropic Commercial Terms}, \href{https://ai.google.dev/gemini-api/terms}{Gemini API Additional Terms} \\
\bottomrule
\end{tabular}
\end{table}

\section{Candidate Verification Case Study}
\label{app:candidate_verification_case}

We provide a qualitative example from BIRD Dev to illustrate why clause-level verification can be more reliable than selecting candidates by sampling frequency or terminal-only reward scores. As shown in \Cref{fig:candidate_verification_case}, the question asks for the Italian flavor text of the card \texttt{Ancestor's Chosen} in the \texttt{card\_games} database. The gold query must join \texttt{cards} with \texttt{foreign\_data} through \texttt{uuid}, because the card name is stored in \texttt{cards}, whereas the localized flavor text and language are stored in \texttt{foreign\_data}.

\begin{figure}[!ht]
    \centering
    \includegraphics[width=0.98\textwidth]{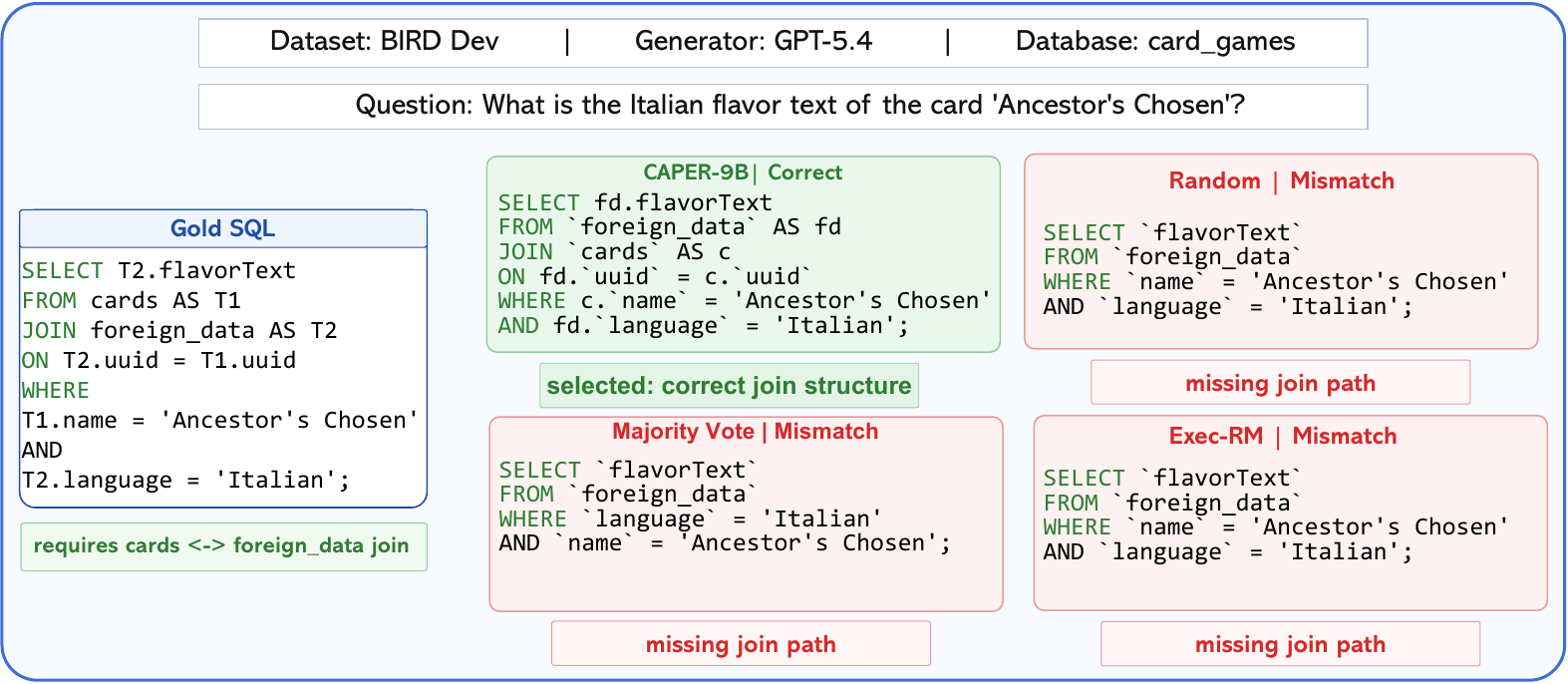}
    \caption{Candidate verification case study on BIRD Dev with GPT-5.4 candidates. Random, Majority Vote@8, and \texttt{Exec-RM} select candidates that directly filter \texttt{foreign\_data} by card name and therefore miss the required \texttt{cards}--\texttt{foreign\_data} join. \texttt{{\method}-9B} selects the only candidate that matches the gold join structure.}
    \label{fig:candidate_verification_case}
\end{figure}

This example exposes a common failure mode for simpler selectors. The incorrect candidates mention the requested output column and language condition, but they omit the schema-level dependency between the English card name and its localized foreign record. As a result, Random, Majority Vote@8, and \texttt{Exec-RM} all choose candidates that fail execution correctness. In contrast, \texttt{{\method}-9B} selects the candidate that preserves the required clause structure: the join between \texttt{cards} and \texttt{foreign\_data}, the name filter on \texttt{cards}, and the language filter on \texttt{foreign\_data}. This case suggests that clause-level process supervision provides a transferable structural quality signal when terminal-only or frequency-based selectors prefer superficially plausible SQL.

\section{Evaluation Prompt Example}
\label{app:prompt_example}

\begin{tcblisting}{
title=Example Evaluation Prompt,
enhanced,
breakable,
colback=white,
colframe=blue!55!black,
colbacktitle=blue!55!black,
coltitle=white,
fonttitle=\bfseries\large,
boxrule=0.9pt,
arc=0pt,
outer arc=0pt,
left=10pt,
right=10pt,
top=8pt,
bottom=8pt,
toptitle=4pt,
bottomtitle=4pt,
drop shadow,
listing only,
listing options={basicstyle=\ttfamily\scriptsize,breaklines=true,breakatwhitespace=false,columns=fullflexible}
}
Task Overview:
You are a data science expert. Below, you are provided with a database
schema and a natural language question. Your task is to understand the
schema and generate a valid SQL query to answer the question.

Database Engine:
SQLite

Database ID:
formula_1

Database Schema:
CREATE TABLE `circuits` (
  `circuitId` INTEGER,
  `circuitRef` TEXT,
  `name` TEXT,
  `location` TEXT,
  `country` TEXT,
  `lat` REAL,
  `lng` REAL,
  `alt` INTEGER,
  `url` TEXT,
  PRIMARY KEY (`circuitId`)
);
CREATE TABLE `constructors` (
  `constructorId` INTEGER,
  `constructorRef` TEXT,
  `name` TEXT,
  `nationality` TEXT,
  `url` TEXT,
  PRIMARY KEY (`constructorId`)
);
CREATE TABLE `drivers` (
  `driverId` INTEGER,
  `driverRef` TEXT,
  `number` INTEGER,
  `code` TEXT,
  `forename` TEXT,
  `surname` TEXT,
  `dob` DATE,
  `nationality` TEXT,
  `url` TEXT,
  PRIMARY KEY (`driverId`)
);
CREATE TABLE `seasons` (
  `year` INTEGER,
  `url` TEXT,
  PRIMARY KEY (`year`)
);
CREATE TABLE `races` (
  `raceId` INTEGER,
  `year` INTEGER,
  `round` INTEGER,
  `circuitId` INTEGER,
  `name` TEXT,
  `date` DATE,
  `time` TEXT,
  `url` TEXT,
  PRIMARY KEY (`raceId`)
);
CREATE TABLE `constructorResults` (
  `constructorResultsId` INTEGER,
  `raceId` INTEGER,
  `constructorId` INTEGER,
  `points` REAL,
  `status` TEXT,
  PRIMARY KEY (`constructorResultsId`)
);
CREATE TABLE `constructorStandings` (
  `constructorStandingsId` INTEGER,
  `raceId` INTEGER,
  `constructorId` INTEGER,
  `points` REAL,
  `position` INTEGER,
  `positionText` TEXT,
  `wins` INTEGER,
  PRIMARY KEY (`constructorStandingsId`)
);
CREATE TABLE `driverStandings` (
  `driverStandingsId` INTEGER,
  `raceId` INTEGER,
  `driverId` INTEGER,
  `points` REAL,
  `position` INTEGER,
  `positionText` TEXT,
  `wins` INTEGER,
  PRIMARY KEY (`driverStandingsId`)
);
CREATE TABLE `lapTimes` (
  `raceId` INTEGER,
  `driverId` INTEGER,
  `lap` INTEGER,
  `position` INTEGER,
  `time` TEXT,
  `milliseconds` INTEGER,
  PRIMARY KEY (`raceId`, `driverId`, `lap`)
);
CREATE TABLE `pitStops` (
  `raceId` INTEGER,
  `driverId` INTEGER,
  `stop` INTEGER,
  `lap` INTEGER,
  `time` TEXT,
  `duration` TEXT,
  `milliseconds` INTEGER,
  PRIMARY KEY (`raceId`, `driverId`, `stop`)
);
CREATE TABLE `qualifying` (
  `qualifyId` INTEGER,
  `raceId` INTEGER,
  `driverId` INTEGER,
  `constructorId` INTEGER,
  `number` INTEGER,
  `position` INTEGER,
  `q1` TEXT,
  `q2` TEXT,
  `q3` TEXT,
  PRIMARY KEY (`qualifyId`)
);
CREATE TABLE `status` (
  `statusId` INTEGER,
  `status` TEXT,
  PRIMARY KEY (`statusId`)
);
CREATE TABLE `results` (
  `resultId` INTEGER,
  `raceId` INTEGER,
  `driverId` INTEGER,
  `constructorId` INTEGER,
  `number` INTEGER,
  `grid` INTEGER,
  `position` INTEGER,
  `positionText` TEXT,
  `positionOrder` INTEGER,
  `points` REAL,
  `laps` INTEGER,
  `time` TEXT,
  `milliseconds` INTEGER,
  `fastestLap` INTEGER,
  `rank` INTEGER,
  `fastestLapTime` TEXT,
  `fastestLapSpeed` TEXT,
  `statusId` INTEGER,
  PRIMARY KEY (`resultId`)
);
-- Foreign Keys:
-- `races`.`circuitId` can be joined with `circuits`.`circuitId`
-- `races`.`year` can be joined with `seasons`.`year`
-- `constructorResults`.`constructorId` can be joined with `constructors`.`constructorId`
-- `constructorResults`.`raceId` can be joined with `races`.`raceId`
-- `constructorStandings`.`constructorId` can be joined with `constructors`.`constructorId`
-- `constructorStandings`.`raceId` can be joined with `races`.`raceId`
-- `driverStandings`.`driverId` can be joined with `drivers`.`driverId`
-- `driverStandings`.`raceId` can be joined with `races`.`raceId`
-- `lapTimes`.`driverId` can be joined with `drivers`.`driverId`
-- `lapTimes`.`raceId` can be joined with `races`.`raceId`
-- `pitStops`.`driverId` can be joined with `drivers`.`driverId`
-- `pitStops`.`raceId` can be joined with `races`.`raceId`
-- `qualifying`.`constructorId` can be joined with `constructors`.`constructorId`
-- `qualifying`.`driverId` can be joined with `drivers`.`driverId`
-- `qualifying`.`raceId` can be joined with `races`.`raceId`
-- `results`.`statusId` can be joined with `status`.`statusId`
-- `results`.`constructorId` can be joined with `constructors`.`constructorId`
-- `results`.`driverId` can be joined with `drivers`.`driverId`
-- `results`.`raceId` can be joined with `races`.`raceId`
This schema describes the database's structure, including tables, columns,
primary keys, foreign keys, and any relevant relationships or constraints.

Question:
Please list the location coordinates of the US circuits.

Output Rules:
1. You must output exactly one <think>...</think> block followed by one
   <answer>...</answer> block.
2. Inside <answer>, include exactly one ```sql ... ``` block containing
   runnable SQLite SQL.
3. Do not output any text before <think> or after </answer>.
4. Do not use <sql>...</sql>.
5. Keep <think> concise (<=120 words) so the final <answer> is never dropped.

Evidence:
location coordinates refers to (lat, lng); the US refers to country = 'USA';

Instructions:
- Make sure you only output the information that is asked in the question. If
  the question asks for a specific column, make sure to only include that column
  in the SELECT clause, nothing more.
- The generated query should return all of the information asked in the question
  without any missing or extra information.
- Note that while the reasoning process and SQL query need to be enclosed within
  <think> </think> and <answer> </answer> tags respectively, this should not
  affect the quality of the SQL generation.
- The answer must contain the SQL query within ```sql ``` tags.

Output Format:
In your answer, please enclose the generated SQL query in a code block:
"""sql
-- Your SQL query
"""

Take a deep breath and think step by step to find the correct SQL query.
\end{tcblisting}


\newpage
\section*{NeurIPS Paper Checklist}

\begin{enumerate}

\item {\bf Claims}
    \item[] Question: Do the main claims made in the abstract and introduction accurately reflect the paper's contributions and scope?
    \item[] Answer: \answerYes{} 
    \item[] Justification: We claimed in the abstract and introduction that we proposed an automatic annotation pipeline for clause-level reward annotation, developed a lightweight clause-level process reward model, improved Text-to-SQL execution accuracy, and improved failure localization.
    \item[] Guidelines:
    \begin{itemize}
        \item The answer \answerNA{} means that the abstract and introduction do not include the claims made in the paper.
        \item The abstract and/or introduction should clearly state the claims made, including the contributions made in the paper and important assumptions and limitations. A \answerNo{} or \answerNA{} answer to this question will not be perceived well by the reviewers. 
        \item The claims made should match theoretical and experimental results, and reflect how much the results can be expected to generalize to other settings. 
        \item It is fine to include aspirational goals as motivation as long as it is clear that these goals are not attained by the paper. 
    \end{itemize}

\item {\bf Limitations}
    \item[] Question: Does the paper discuss the limitations of the work performed by the authors?
    \item[] Answer: \answerYes{} 
    \item[] Justification: We discuss limitations in \Cref{sec:limitations}, including dataset and SQL-dialect scope, clause-granularity limits, and the computational overhead of annotation.
    \item[] Guidelines:
    \begin{itemize}
        \item The answer \answerNA{} means that the paper has no limitation while the answer \answerNo{} means that the paper has limitations, but those are not discussed in the paper. 
        \item The authors are encouraged to create a separate ``Limitations'' section in their paper.
        \item The paper should point out any strong assumptions and how robust the results are to violations of these assumptions (e.g., independence assumptions, noiseless settings, model well-specification, asymptotic approximations only holding locally). The authors should reflect on how these assumptions might be violated in practice and what the implications would be.
        \item The authors should reflect on the scope of the claims made, e.g., if the approach was only tested on a few datasets or with a few runs. In general, empirical results often depend on implicit assumptions, which should be articulated.
        \item The authors should reflect on the factors that influence the performance of the approach. For example, a facial recognition algorithm may perform poorly when image resolution is low or images are taken in low lighting. Or a speech-to-text system might not be used reliably to provide closed captions for online lectures because it fails to handle technical jargon.
        \item The authors should discuss the computational efficiency of the proposed algorithms and how they scale with dataset size.
        \item If applicable, the authors should discuss possible limitations of their approach to address problems of privacy and fairness.
        \item While the authors might fear that complete honesty about limitations might be used by reviewers as grounds for rejection, a worse outcome might be that reviewers discover limitations that aren't acknowledged in the paper. The authors should use their best judgment and recognize that individual actions in favor of transparency play an important role in developing norms that preserve the integrity of the community. Reviewers will be specifically instructed to not penalize honesty concerning limitations.
    \end{itemize}

\item {\bf Theory assumptions and proofs}
    \item[] Question: For each theoretical result, does the paper provide the full set of assumptions and a complete (and correct) proof?
    \item[] Answer: \answerYes{} 
    \item[] Justification: We provide the lemmas and corresponding proofs in \Cref{app:reward_propagation_theory}.
    \item[] Guidelines:
    \begin{itemize}
        \item The answer \answerNA{} means that the paper does not include theoretical results. 
        \item All the theorems, formulas, and proofs in the paper should be numbered and cross-referenced.
        \item All assumptions should be clearly stated or referenced in the statement of any theorems.
        \item The proofs can either appear in the main paper or the supplemental material, but if they appear in the supplemental material, the authors are encouraged to provide a short proof sketch to provide intuition. 
        \item Inversely, any informal proof provided in the core of the paper should be complemented by formal proofs provided in appendix or supplemental material.
        \item Theorems and Lemmas that the proof relies upon should be properly referenced. 
    \end{itemize}

    \item {\bf Experimental result reproducibility}
    \item[] Question: Does the paper fully disclose all the information needed to reproduce the main experimental results of the paper to the extent that it affects the main claims and/or conclusions of the paper (regardless of whether the code and data are provided or not)?
    \item[] Answer: \answerYes{} 
    \item[] Justification: The paper provides detailed descriptions of the dataset curation process, model and parameter configuration, benchmarks and evaluation metrics, baselines, and the environment used for experiments in the main text. Additional details on data splits, hyperparameters, and other experimental settings are provided in the supplemental material.
    \item[] Guidelines:
    \begin{itemize}
        \item The answer \answerNA{} means that the paper does not include experiments.
        \item If the paper includes experiments, a \answerNo{} answer to this question will not be perceived well by the reviewers: Making the paper reproducible is important, regardless of whether the code and data are provided or not.
        \item If the contribution is a dataset and\slash or model, the authors should describe the steps taken to make their results reproducible or verifiable. 
        \item Depending on the contribution, reproducibility can be accomplished in various ways. For example, if the contribution is a novel architecture, describing the architecture fully might suffice, or if the contribution is a specific model and empirical evaluation, it may be necessary to either make it possible for others to replicate the model with the same dataset, or provide access to the model. In general. releasing code and data is often one good way to accomplish this, but reproducibility can also be provided via detailed instructions for how to replicate the results, access to a hosted model (e.g., in the case of a large language model), releasing of a model checkpoint, or other means that are appropriate to the research performed.
        \item While NeurIPS does not require releasing code, the conference does require all submissions to provide some reasonable avenue for reproducibility, which may depend on the nature of the contribution. For example
        \begin{enumerate}
            \item If the contribution is primarily a new algorithm, the paper should make it clear how to reproduce that algorithm.
            \item If the contribution is primarily a new model architecture, the paper should describe the architecture clearly and fully.
            \item If the contribution is a new model (e.g., a large language model), then there should either be a way to access this model for reproducing the results or a way to reproduce the model (e.g., with an open-source dataset or instructions for how to construct the dataset).
            \item We recognize that reproducibility may be tricky in some cases, in which case authors are welcome to describe the particular way they provide for reproducibility. In the case of closed-source models, it may be that access to the model is limited in some way (e.g., to registered users), but it should be possible for other researchers to have some path to reproducing or verifying the results.
        \end{enumerate}
    \end{itemize}

\item {\bf Open access to data and code}
    \item[] Question: Does the paper provide open access to the data and code, with sufficient instructions to faithfully reproduce the main experimental results, as described in supplemental material?
    \item[] Answer: \answerYes{} 
    \item[] Justification: We provide anonymized supplemental materials for data construction, model training, and evaluation, including scripts and instructions for reproducing the main experimental results.
    \item[] Guidelines:
    \begin{itemize}
        \item The answer \answerNA{} means that paper does not include experiments requiring code.
        \item Please see the NeurIPS code and data submission guidelines (\url{https://neurips.cc/public/guides/CodeSubmissionPolicy}) for more details.
        \item While we encourage the release of code and data, we understand that this might not be possible, so \answerNo{} is an acceptable answer. Papers cannot be rejected simply for not including code, unless this is central to the contribution (e.g., for a new open-source benchmark).
        \item The instructions should contain the exact command and environment needed to run to reproduce the results. See the NeurIPS code and data submission guidelines (\url{https://neurips.cc/public/guides/CodeSubmissionPolicy}) for more details.
        \item The authors should provide instructions on data access and preparation, including how to access the raw data, preprocessed data, intermediate data, and generated data, etc.
        \item The authors should provide scripts to reproduce all experimental results for the new proposed method and baselines. If only a subset of experiments are reproducible, they should state which ones are omitted from the script and why.
        \item At submission time, to preserve anonymity, the authors should release anonymized versions (if applicable).
        \item Providing as much information as possible in supplemental material (appended to the paper) is recommended, but including URLs to data and code is permitted.
    \end{itemize}

\item {\bf Experimental setting/details}
    \item[] Question: Does the paper specify all the training and test details (e.g., data splits, hyperparameters, how they were chosen, type of optimizer) necessary to understand the results?
    \item[] Answer: \answerYes{} 
    \item[] Justification: The paper specifies the dataset curation process, model and parameter configuration, benchmarks, evaluation metrics, baselines, and experimental environment in the main text and appendix. Additional data splits, hyperparameters, and inference details are provided in \Cref{app:exp_details}.
    \item[] Guidelines:
    \begin{itemize}
        \item The answer \answerNA{} means that the paper does not include experiments.
        \item The experimental setting should be presented in the core of the paper to a level of detail that is necessary to appreciate the results and make sense of them.
        \item The full details can be provided either with the code, in appendix, or as supplemental material.
    \end{itemize}

\item {\bf Experiment statistical significance}
    \item[] Question: Does the paper report error bars suitably and correctly defined or other appropriate information about the statistical significance of the experiments?
    \item[] Answer: \answerNo{} 
    \item[] Justification: We do not report error bars, confidence intervals, or statistical significance tests because full repeated RL training with large language models is computationally expensive. We instead report controlled comparisons, ablations, and results across multiple datasets and decoding settings.
    \item[] Guidelines:
    \begin{itemize}
        \item The answer \answerNA{} means that the paper does not include experiments.
        \item The authors should answer \answerYes{} if the results are accompanied by error bars, confidence intervals, or statistical significance tests, at least for the experiments that support the main claims of the paper.
        \item The factors of variability that the error bars are capturing should be clearly stated (for example, train/test split, initialization, random drawing of some parameter, or overall run with given experimental conditions).
        \item The method for calculating the error bars should be explained (closed form formula, call to a library function, bootstrap, etc.)
        \item The assumptions made should be given (e.g., Normally distributed errors).
        \item It should be clear whether the error bar is the standard deviation or the standard error of the mean.
        \item It is OK to report 1-sigma error bars, but one should state it. The authors should preferably report a 2-sigma error bar than state that they have a 96\% CI, if the hypothesis of Normality of errors is not verified.
        \item For asymmetric distributions, the authors should be careful not to show in tables or figures symmetric error bars that would yield results that are out of range (e.g., negative error rates).
        \item If error bars are reported in tables or plots, the authors should explain in the text how they were calculated and reference the corresponding figures or tables in the text.
    \end{itemize}

\item {\bf Experiments compute resources}
    \item[] Question: For each experiment, does the paper provide sufficient information on the computer resources (type of compute workers, memory, time of execution) needed to reproduce the experiments?
    \item[] Answer: \answerYes{} 
    \item[] Justification: The paper reports the GPU type and memory, the RL framework, key training hyperparameters, and measured wall-clock time and GPU-hours for Clause-PRM training, GRPO policy optimization, and evaluation in \Cref{app:exp_details}.
    \item[] Guidelines:
    \begin{itemize}
        \item The answer \answerNA{} means that the paper does not include experiments.
        \item The paper should indicate the type of compute workers CPU or GPU, internal cluster, or cloud provider, including relevant memory and storage.
        \item The paper should provide the amount of compute required for each of the individual experimental runs as well as estimate the total compute. 
        \item The paper should disclose whether the full research project required more compute than the experiments reported in the paper (e.g., preliminary or failed experiments that didn't make it into the paper). 
    \end{itemize}
    
\item {\bf Code of ethics}
    \item[] Question: Does the research conducted in the paper conform, in every respect, with the NeurIPS Code of Ethics \url{https://neurips.cc/public/EthicsGuidelines}?
    \item[] Answer: \answerYes{} 
    \item[] Justification: The work uses public Text-to-SQL benchmarks, model checkpoints, software, and API services under their stated licenses or access terms; does not involve human subjects, crowdsourcing, or newly collected personal data; and documents limitations, potential negative societal impacts, compute resources, release documentation, and asset licenses in the paper, appendix, checklist, and supplementary material.
    \item[] Guidelines:
    \begin{itemize}
        \item The answer \answerNA{} means that the authors have not reviewed the NeurIPS Code of Ethics.
        \item If the authors answer \answerNo, they should explain the special circumstances that require a deviation from the Code of Ethics.
        \item The authors should make sure to preserve anonymity (e.g., if there is a special consideration due to laws or regulations in their jurisdiction).
    \end{itemize}

\item {\bf Broader impacts}
    \item[] Question: Does the paper discuss both potential positive societal impacts and negative societal impacts of the work performed?
    \item[] Answer: \answerYes{} 
    \item[] Justification: The paper discusses positive impacts from improving the efficiency and accessibility of Text-to-SQL systems, as well as potential negative impacts from incorrect SQL generation, over-reliance on automated database access, and misuse on sensitive databases.
    \item[] Guidelines:
    \begin{itemize}
        \item The answer \answerNA{} means that there is no societal impact of the work performed.
        \item If the authors answer \answerNA{} or \answerNo, they should explain why their work has no societal impact or why the paper does not address societal impact.
        \item Examples of negative societal impacts include potential malicious or unintended uses (e.g., disinformation, generating fake profiles, surveillance), fairness considerations (e.g., deployment of technologies that could make decisions that unfairly impact specific groups), privacy considerations, and security considerations.
        \item The conference expects that many papers will be foundational research and not tied to particular applications, let alone deployments. However, if there is a direct path to any negative applications, the authors should point it out. For example, it is legitimate to point out that an improvement in the quality of generative models could be used to generate Deepfakes for disinformation. On the other hand, it is not needed to point out that a generic algorithm for optimizing neural networks could enable people to train models that generate Deepfakes faster.
        \item The authors should consider possible harms that could arise when the technology is being used as intended and functioning correctly, harms that could arise when the technology is being used as intended but gives incorrect results, and harms following from (intentional or unintentional) misuse of the technology.
        \item If there are negative societal impacts, the authors could also discuss possible mitigation strategies (e.g., gated release of models, providing defenses in addition to attacks, mechanisms for monitoring misuse, mechanisms to monitor how a system learns from feedback over time, improving the efficiency and accessibility of ML).
    \end{itemize}
    
\item {\bf Safeguards}
    \item[] Question: Does the paper describe safeguards that have been put in place for responsible release of data or models that have a high risk for misuse (e.g., pre-trained language models, image generators, or scraped datasets)?
    \item[] Answer: \answerNA{} 
    \item[] Justification: The paper does not release data or models that have a high risk for misuse; the released artifacts are limited to Text-to-SQL annotation, training, and evaluation assets.
    \item[] Guidelines:
    \begin{itemize}
        \item The answer \answerNA{} means that the paper poses no such risks.
        \item Released models that have a high risk for misuse or dual-use should be released with necessary safeguards to allow for controlled use of the model, for example by requiring that users adhere to usage guidelines or restrictions to access the model or implementing safety filters. 
        \item Datasets that have been scraped from the Internet could pose safety risks. The authors should describe how they avoided releasing unsafe images.
        \item We recognize that providing effective safeguards is challenging, and many papers do not require this, but we encourage authors to take this into account and make a best faith effort.
    \end{itemize}

\item {\bf Licenses for existing assets}
    \item[] Question: Are the creators or original owners of assets (e.g., code, data, models), used in the paper, properly credited and are the license and terms of use explicitly mentioned and properly respected?
    \item[] Answer: \answerYes{} 
    \item[] Justification: We cite the original papers for datasets, models, and software, summarize their licenses or provider access terms in \Cref{tab:existing_assets}, and include asset links and release notes in the supplemental README.
    \item[] Guidelines:
    \begin{itemize}
        \item The answer \answerNA{} means that the paper does not use existing assets.
        \item The authors should cite the original paper that produced the code package or dataset.
        \item The authors should state which version of the asset is used and, if possible, include a URL.
        \item The name of the license (e.g., CC-BY 4.0) should be included for each asset.
        \item For scraped data from a particular source (e.g., website), the copyright and terms of service of that source should be provided.
        \item If assets are released, the license, copyright information, and terms of use in the package should be provided. For popular datasets, \url{paperswithcode.com/datasets} has curated licenses for some datasets. Their licensing guide can help determine the license of a dataset.
        \item For existing datasets that are re-packaged, both the original license and the license of the derived asset (if it has changed) should be provided.
        \item If this information is not available online, the authors are encouraged to reach out to the asset's creators.
    \end{itemize}

\item {\bf New assets}
    \item[] Question: Are new assets introduced in the paper well documented and is the documentation provided alongside the assets?
    \item[] Answer: \answerYes{} 
    \item[] Justification: We release a new clause-annotated dataset and a clause-level process reward model; the README in the supplemental code package documents data construction, splits, training configuration, evaluation protocol, and limitations.
    \item[] Guidelines:
    \begin{itemize}
        \item The answer \answerNA{} means that the paper does not release new assets.
        \item Researchers should communicate the details of the dataset\slash code\slash model as part of their submissions via structured templates. This includes details about training, license, limitations, etc. 
        \item The paper should discuss whether and how consent was obtained from people whose asset is used.
        \item At submission time, remember to anonymize your assets (if applicable). You can either create an anonymized URL or include an anonymized zip file.
    \end{itemize}

\item {\bf Crowdsourcing and research with human subjects}
    \item[] Question: For crowdsourcing experiments and research with human subjects, does the paper include the full text of instructions given to participants and screenshots, if applicable, as well as details about compensation (if any)? 
    \item[] Answer: \answerNA{} 
    \item[] Justification: The paper does not involve crowdsourcing or research with human subjects.
    \item[] Guidelines:
    \begin{itemize}
        \item The answer \answerNA{} means that the paper does not involve crowdsourcing nor research with human subjects.
        \item Including this information in the supplemental material is fine, but if the main contribution of the paper involves human subjects, then as much detail as possible should be included in the main paper. 
        \item According to the NeurIPS Code of Ethics, workers involved in data collection, curation, or other labor should be paid at least the minimum wage in the country of the data collector. 
    \end{itemize}

\item {\bf Institutional review board (IRB) approvals or equivalent for research with human subjects}
    \item[] Question: Does the paper describe potential risks incurred by study participants, whether such risks were disclosed to the subjects, and whether Institutional Review Board (IRB) approvals (or an equivalent approval/review based on the requirements of your country or institution) were obtained?
    \item[] Answer: \answerNA{} 
    \item[] Justification: The paper does not involve crowdsourcing or research with human subjects.
    \item[] Guidelines:
    \begin{itemize}
        \item The answer \answerNA{} means that the paper does not involve crowdsourcing nor research with human subjects.
        \item Depending on the country in which research is conducted, IRB approval (or equivalent) may be required for any human subjects research. If you obtained IRB approval, you should clearly state this in the paper. 
        \item We recognize that the procedures for this may vary significantly between institutions and locations, and we expect authors to adhere to the NeurIPS Code of Ethics and the guidelines for their institution. 
        \item For initial submissions, do not include any information that would break anonymity (if applicable), such as the institution conducting the review.
    \end{itemize}

\item {\bf Declaration of LLM usage}
    \item[] Question: Does the paper describe the usage of LLMs if it is an important, original, or non-standard component of the core methods in this research? Note that if the LLM is used only for writing, editing, or formatting purposes and does \emph{not} impact the core methodology, scientific rigor, or originality of the research, declaration is not required.
    \item[] Answer: \answerYes{} 
    \item[] Justification: The paper describes the use of LLMs as the Text-to-SQL policy, clause-level process reward model, and evaluation baselines in the methodology and experimental setup. LLM assistance used only for writing and editing did not affect the core methodology, scientific rigor, or originality of the work.
    \item[] Guidelines:
    \begin{itemize}
        \item The answer \answerNA{} means that the core method development in this research does not involve LLMs as any important, original, or non-standard components.
        \item Please refer to our LLM policy in the NeurIPS handbook for what should or should not be described.
    \end{itemize}

\end{enumerate}

\end{document}